\newtheorem{thm}{Theorem}
\newtheorem{lem}{Lemma}
\newtheorem{rem}{Remark}
\newtheorem{definition}{Definition}
\newtheorem{assumption}{Assumption}
\begin{document}
%
% paper title
% can use linebreaks \\ within to get better formatting as desired
\title{A Statistically Modeling  Method for Performance Limits in Sensor Localization}
%
%
% author names and IEEE memberships
% note positions of commas and nonbreaking spaces ( ~ ) LaTeX will not break
% a structure at a ~ so this keeps an author's name from being broken across
% two lines.
% use \thanks{} to gain access to the first footnote area
% a separate \thanks must be used for each paragraph as LaTeX2e's \thanks
% was not built to handle multiple paragraphs
%

\author{Baoqi Huang, Tao Li, Brian D.O. Anderson, Changbin Yu$^*$ % <-this % stops a space
\thanks{B. Huang and B.D.O. Anderson  are with Research School of Engineering, the Australian National University, Canberra, ACT
        2600, Australia, and National ICT Australia Ltd.
   Email:{\tt\small \{Baoqi.Huang, Brian.Anderson\}@anu.edu.au}.}%
\thanks{T. Li is with  the Key Laboratory of Systems and Control, Institute of Systems Science, Academy of Mathematics
and Systems Science, Chinese Academy of Sciences, Beijing 100190, China
       Email:{\tt\small litao@amss.ac.cn}.}%
\thanks{C. Yu is with Research School of Engineering,
   the Australian National University, Canberra, ACT
        2600, Australia.
       Email:{\tt\small Brad.Yu@anu.edu.au}. Corresponding author.
}%
}

% The paper headers
%\markboth{Journal of \LaTeX\ Class Files,~Vol.~6, No.~1, January~2007}%
%{Shell \MakeLowercase{\textit{et al.}}: Bare Demo of IEEEtran.cls for Journals}
% The only time the second header will appear is for the odd numbered pages
% after the title page when using the twoside option.
%
% *** Note that you probably will NOT want to include the author's ***
% *** name in the headers of peer review papers.                   ***
% You can use \ifCLASSOPTIONpeerreview for conditional compilation here if
% you desire.

% make the title area
\maketitle

\begin{abstract}
In this paper, we study performance limits of sensor localization from a novel perspective. Specifically, we consider
the Cram\'{e}r-Rao Lower Bound (CRLB) in single-hop sensor localization using measurements from received signal
strength (RSS), time of arrival (TOA) and bearing, respectively, but differently from the existing work, we
statistically analyze the trace of the associated CRLB matrix (i.e. as a scalar metric for performance limits of sensor
localization) by assuming anchor locations are random. By the Central Limit Theorems for $U$-statistics, we show that
as the number of the anchors increases, this scalar metric is asymptotically normal in the RSS/bearing case, and
converges to a random variable which is an affine transformation of a chi-square random variable of degree $2$ in the
TOA case. Moreover, we provide formulas quantitatively describing the relationship among the mean and standard
deviation of the scalar metric, the number of the anchors, the parameters of communication channels, the noise
statistics in measurements and the spatial distribution of the anchors. These formulas, though asymptotic in the number
of the anchors, in many cases turn out to be remarkably accurate in predicting performance limits, even if the number
is small. Simulations are carried out to confirm our results.
\end{abstract}
% IEEEtran.cls defaults to using nonbold math in the Abstract.
% This preserves the distinction between vectors and scalars. However,
% if the journal you are submitting to favors bold math in the abstract,
% then you can use LaTeX's standard command \boldmath at the very start
% of the abstract to achieve this. Many IEEE journals frown on math
% in the abstract anyway.

% Note that keywords are not normally used for peerreview papers.
%EDIC Categories: SEN-INFO, SEN-LOCL, SSP-PERF.

\begin{IEEEkeywords}
Cram\'{e}r-Rao lower bound, received signal strength (RSS), time of arrival (TOA), bearing, $U$-statistics.
\end{IEEEkeywords}

% For peer review papers, you can put extra information on the cover
% page as needed:
% \ifCLASSOPTIONpeerreview
% \begin{center} \bfseries EDICS Category: 3-BBND \end{center}
% \fi
%
% For peerreview papers, this IEEEtran command inserts a page break and
% creates the second title. It will be ignored for other modes.
\IEEEpeerreviewmaketitle

\section{Introduction}
% The very first letter is a 2 line initial drop letter followed
% by the rest of the first word in caps.
%
% form to use if the first word consists of a single letter:
% \IEEEPARstart{A}{demo} file is ....
%
% form to use if you need the single drop letter followed by
% normal text (unknown if ever used by IEEE):
% \IEEEPARstart{A}{}demo file is ....
%
% Some journals put the first two words in caps:
% \IEEEPARstart{T}{his demo} file is ....
%
% Here we have the typical use of a "T" for an initial drop letter
% and "HIS" in caps to complete the first word.
\IEEEPARstart{W}{ireless} sensor networks have a wide range of applications nowadays, including military operations,
medical treatments, environmental sensing, water quality monitoring and many others \cite{Akyildiz02,Chong03}. Location
information plays a vital role in those applications, for it is useful to report the geographic origin of events, to
assist in target tracking, to achieve geographic aware routing, to manage sensor networks, to evaluate their coverage,
and so on. A sensor network generally consists of two types of nodes: anchors and sensors. Anchor locations are known a
priori through GPS or manual configurations, while sensor locations are not known and need to be determined through the
procedures of sensor localization. Up to now, considerable efforts have been invested into the development of
localization algorithms, see e.g. \cite{NN01,SRZ03,SPS03,CAM06,FCMA09,KKM09,KKM10}. Take trilateration, the most basic
localization technique, for example: in a two-dimensional plane, the location of a sensor is estimated from the known
locations of at least three non-collinear anchors and the measured distance, e.g. from received signal strength (RSS)
or time of arrival (TOA), to each anchor; this is also termed as single-hop distance-based sensor localization. As an
extension, in the multi-hop case where not every sensor directly refers to a sufficient number of anchors, iterative
trilateration is proposed by using already localized sensors as pseudo-anchors \cite{SPS03}. Besides, approximate
sensor localization can be realized using mere connectivity data between pairs of neighboring nodes, namely
connectivity-based sensor localization, see e.g. \cite{NN01,SRZ03}.

Single-hop sensor localization can be found in many practical localization scenarios, such as source localization and
target tracking. Moreover, in simultaneous localization and mapping (SLAM) \cite{LD91, DNCDC01}, a mobile robot
equipped with a GPS receiver moves in a two-dimensional environment, measures relative location information to various
objects, and then determines the locations of these objects; herein, the positions where the robot makes measurements
can be abstracted as \emph{anchors}, such that the localization procedure is single-hop. In \cite{SOJ05,PBSJ05}, a
mobile anchor(s) is used to assist in sensor localization by providing relative location measurements to sensors at
multiple positions, which is evidently single-hop. Therefore, it is meaningful to study single-hop sensor localization.

Apart from localization algorithms, the performance limit of sensor localization, namely the lowest achievable error
bound for location estimates, also attracts much attention. On the one hand, it provides a measure of theoretically
optimal performance no matter what sensor localization algorithm is applied; on the other hand, it reflects
fundamentals of sensor localization. Since the Cram\'{e}r-Rao lower bound (CRLB) establishes a lower limit (or bound)
on the variance for any unbiased estimator, it has been widely used in the performance analysis of sensor localization,
see e.g. \cite{PH03,Larsson04,MLV06}.

For single-hop sensor localization in a two-dimensional plane, the CRLB is a $2\times2$ matrix and turns out to be
dependent on multiple factors, including measuring techniques, noise statistics of measurements and sensor-anchor
geometries (i.e. relative node locations, or their coordinates). Since the trace of the CRLB matrix is the minimum
\emph{mean square estimation error (MSE)}, it is often used as a {\em scalar metric} for the performance limit
\cite{PH03,CS04}. Provided that the measuring technique and the noise statistics of measurements are both known, the
scalar metric can be regarded as a function of the sensor-anchor geometry. A valuable problem arises to be seeking to
minimize the scalar metric, equivalent to identifying optimal sensor-anchor geometries for sensor localization, and has
been widely studied \cite{NLG84,DH08,BJ09,BFADP10}. Moreover, since the scalar metric can be infinite, implying that a
localization problem is badly conditioned (e.g. the anchors being or nearly being collinear with the sensor) and
localization algorithms almost fail, we should avoid the situations where the scalar metric takes large values.

The conventional CRLB studies assume a deterministic sensor-anchor geometry which is normally unobtainable for a real
localization problem. Instead, a probability measure for the sensor-anchor geometry might be available. For example,
prior to deploying anchors into a field, we can assume a random and uniform distribution for the anchors' positions.
Hence, we are motivated to study the scalar metric based on a statistical sensor-anchor geometry modeling method. This
method is different from the {\em modified CRLB} in the estimation of nonrandom parameters in the presence of unwanted
(or nuisance) parameters, see e.g. \cite{Gini98,GRM98}; to be specific, the nuisance parameters considered in the
modified CRLB are random in real estimation problems, whereas the anchor positions in our case are fixed and known in
real localization problems but are artificially randomized to obtain a better understanding of localization
performance. To the best of our knowledge, this method has never been considered.

Additionally, the following considerations also motivate this study. Firstly, supposing that every possible
sensor-anchor geometry is equi-probable, the distribution of the scalar metric offers a broad, statistical view on the
performance of single-hop sensor localization, in contrast to one deterministic quantity for a given sensor-anchor
geometry. For instance, if the scalar metric hardly takes large values, there is less reason to worry about the
sensor-anchor geometry; otherwise, one must impose proper control on it. Secondly, the mean of the scalar metric
further establishes a lower limit on the performance of single-hop sensor localization given a fixed number of anchors
with undetermined locations; in the situations where sensor-anchor geometries are unknown, e.g. prior to system
deployment, this performance limit is certainly useful. Lastly, the statistical sensor-anchor geometry modeling method
not only provides insights into single-hop sensor localization and in turn guides us in the design and deployment of
wireless sensor networks, but also as a prototype paves the way for dealing with more complicated scenarios of sensor
localization. For instance, in a mobile environment, as may arise with ad-hoc networks, SLAM, mobile anchors assisting
in sensor localization and so on, it is trivial to concentrate on localization performance in one particular time
instant, whereas it is evidently more attractive to grasp the knowledge about the average localization performance over
a period of time and/or in a wide region. Hopefully, these challenges can be addressed by the statistically modeling
method. In summary, statistical sensor-anchor geometry modeling is a powerful method for investigating the performance
limit of sensor localization.

In this paper, we take into account RSS-based, TOA-based and bearing-only localization, respectively, and show that the
scalar metric in each case is essentially a function of $U$-statistics [see Section \ref{sec:ustatistics} for further
details]. Based on the theory of $U$-statistics, we make the following contributions: (i) it is proved that as the
number of the anchors increases, the scalar metric is asymptotically {\em normal} in the RSS/bearing case and converges
to a random variable which is an affine transformation of a {\em chi-square} random variable of degree $2$ in the TOA
case; (ii) the convergence rate in the RSS/bearing case is shown to be as fast as $O(n^{-\frac{1}{2}})$, where $n$ is the
number of the anchors; (iii) the \emph{asymptotic} formulas for the mean and standard deviation of the scalar metric
are derived in both cases; (iv) last but not the least, these formulas are analyzed to demonstrate some properties of
sensor localization and the conclusions are verified by extensive simulations. Although the derived formulas are
asymptotic in the number of the anchors, in many cases they are remarkably accurate in predicting the performance limit
of sensor localization, even if the number of the anchors is small. For instance, when the number of the anchors is as
small as $6$, the formula for the mean of the scalar metric in the TOA case is capable of providing accurate
predictions.

The remainder of this paper is organized as follows. The next section introduces the problem of single-hop sensor
localization using the RSS measurements and formulates the scalar metric. Section \ref{sec:rss} presents the main
results in the RSS case and extends them in the bearing case. Section \ref{sec:toa} shows the results in the TOA case.
Finally, we conclude this paper and shed light on future work in Section \ref{sec:conclusion}.

\section{Problem Formulation}\label{sec:formulation}

In this section, we first formulate the scalar metric for the performance of single-hop sensor localization using RSS
measurements and then define a random sensor-anchor geometry model. Throughout this paper, we shall use the following
mathematical notations: $(\cdot)^T$ denotes transpose of a matrix or a vector; $Tr(\cdot)$ denotes the trace of a
square matrix; $Pr\{\cdot\}$ denotes the probability of an event; $E_{x}(\cdot)$ and $Std_{x}(\cdot)$ denote the
statistical expectation and standard deviation with respect to the subscripted variable $x$.

\subsection{Single-hop Sensor Localization using RSS Measurements}
In a two-dimensional plane, consider a single sensor (or source, target) located at the origin and $n$ (distance or
angle) measurements made to this sensor at $n$ known locations, as illustrated in Fig.~\ref{fig:network}. Here, the $n$
known locations are abstracted as anchors and are labeled $1,\cdots,n$ with the $i$-th anchor's location denoted by
$\mathbf{a}_i$; the true distance between the sensor and the $i$-th anchor is denoted by $d_i=\|\mathbf{a}_i\|$; the
true angle subtended by $\mathbf{a}_i$ and the positive $x$-axis is denoted by $\alpha_i$.

Take a distance-based localization problem for example: at least $3$ non-collinear anchors are required, namely
$n\geq3$; pair-wise distance measurements $\{\hat{d}_i, i=1,\cdots,n\}$ between the sensor and the $n$ anchors are made
and obey certain measurement models; then, the task is to find an estimate of the true sensor location using the
observable set of distance measurements $\{\hat{d}_i, i=1, \cdots, n\}$.

\begin{figure}
\begin{center}
\centering\includegraphics[scale =1]{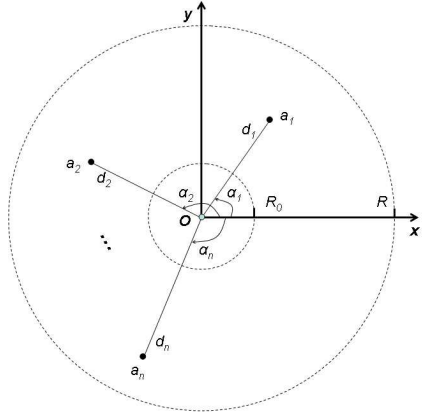} \caption{Single-hop sensor localization.}\label{fig:network}
\end{center}
\end{figure}

Without loss of generality, we let the sensor be a transmitter and the $n$ anchors be receivers. Denote by $\{P_i,
i=1,\cdots,n\}$ the measured received powers at the $n$ anchors transmitted by the sensor, which satisfy the following
assumption.
\begin{assumption}\label{asp:rss2}
The wireless channel satisfies the log-normal (shadowing) model and the received powers $\{P_i, i=1, \cdots, n\}$ at
the $n$ anchors are statistically independent.
\end{assumption}
\begin{rem}
Assumption \ref{asp:rss2} is the basis for converting the RSS measurements (i.e. received powers) to distance estimates
\cite{CD09}, and is  commonly made in both theoretical studies (e.g. \cite{Li07,BJ09,OWL10,SL11}) and experimental
studies (e.g. \cite{PH03,CL09}) on RSS-based sensor localization. It follows that
\begin{equation}
P_i (\text{dBm}) = \overline{P_0}(\text{dBm})-10\alpha\log_{10}\frac{d_i}{R_0}+Z,
\end{equation}
where $P_i(\text{dBm})=10\log_{10} P_i$, $\overline{P_0}(\text{dBm})$ is the mean received power in dBm at a reference
distance $R_0$, $\alpha$ is the path-loss exponent, and $Z$ is a random variable representing the shadowing effect,
normally distributed with mean zero and variance $\sigma_{dB}^2$. As pointed out in \cite{R01}, due to the fact that
the log-normal model does not hold for $d_i=0$, the close-in distance $R_0$ is introduced as the known received power
reference point, and is virtually the lower bound on practical distances used in the wireless communication system.
Further, $\overline{P_0}(\text{dBm})$ is computed from the free space path-loss formula (see, e.g. \cite{R01}).
\end{rem}

\subsection{A Random Sensor-Anchor Geometry Model}
To realize our analysis, we define a random sensor-anchor geometry model by assuming
\begin{assumption}\label{asp:rss1}
The $n$ anchors are randomly and uniformly distributed inside an annulus centered at the sensor and defined by radii
$R_0$ and $R$ ($R>R_0>0$).
\end{assumption}

\begin{rem}
In Assumption \ref{asp:rss1}, $R$ is the upper bound on practical distances which is normally restricted by the factors
determining path-loss attenuations; the lower bound, though set as the close-in distance $R_0$, is mainly devised to
avoid the inconvenience in calculations, and theoretically speaking, can be any arbitrarily small positive number. By
Assumption \ref{asp:rss1}, each possible sensor-anchor geometry is as probable as another, in the sense that the
sensor-anchor geometry follows a ``uniform'' distribution. Besides, $\{d_i, i=1, \cdots, n\}$ and $\{\alpha_i, i=1,
\cdots, n\}$ are mutually independent with probability density functions (pdfs)
\begin{eqnarray}
\label{eqn:pdfdi} f_{d}(x)=\left\{
\begin{array}{ll}
\frac{2x}{R^2-R_0^2},& R_0\leq x\leq R;\\
 0, & \hbox{others}.
 \end{array}\right.
\end{eqnarray}
\begin{eqnarray}
\label{eqn:pdfthetai}
f_{\alpha}(x)=\left\{
\begin{array}{ll}
\frac{1}{2\pi}, & 0\leq x<2\pi;\\
0,& \hbox{others}.
\end{array}\right.
\end{eqnarray}
\end{rem}

\subsection{The Scalar Metric}
Denote by $(\hat x, \hat y)$ the unbiased position estimate of the sensor as well as estimation error (because this
sensor is located at $(0,0)$). Clearly, $\hat x$ and $\hat y$ are dependent on $\mathbf{\upsilon}$ and $\mathbf\omega$,
where
\begin{eqnarray}
% \nonumber to remove numbering (before each equation)
 \mathbf{\upsilon}&=& \{P_i,i=1,\cdots,n\}, \\
 \mathbf\omega &=& \{d_i,\alpha_i,i=1,\cdots,n\}.
\end{eqnarray}

The pdf of $P_i$ can be formulated as
\begin{equation}
f_{P_i}(x) = \left\{
\begin{array}{ll}
\frac{10}{(\ln10)\sqrt{2\pi}\sigma_{dB}P_i}\exp\left\{-\frac{b}{2}\left(\ln\frac{d_i}{R_0}+\frac{1}{\alpha}\ln\frac{x}{\overline P_0}\right)^2\right\}, & x>0;\\
0, & x\leq0.
\end{array}\right.
\end{equation}
where $b=\left(\frac{ 10\alpha}{\sigma_{dB}\ln10}\right)^2$.
% and $\tilde{d_i}=R_0\left(\frac{\overline{P_0}}{x}\right)^{\frac{1}{\alpha}}$ In fact, $\tilde{d_i}$ is the Maximum Likelihood estimate of $d_i$ given the received power $P_i$ \cite{PH03}.
Then, we can formulate the Fisher information matrix (FIM) $F_{RSS}$ as follows.
\begin{equation}\label{eqn:frss}
F_{RSS}=b\left(
 \begin{array}{cc}
  \sum_{i=1}^n  \frac{\cos^2\alpha_i}{d_i^2} &  \sum_{i=1}^n  \frac{\cos\alpha_i\sin\alpha_i}{d_i^2}\\
  \sum_{i=1}^n  \frac{\cos\alpha_i\sin\alpha_i}{d_i^2}   &  \sum_{i=1}^n  \frac{\sin^2\alpha_i}{d_i^2}
 \end{array}
\right).
\end{equation}
A detailed derivation can be found in \cite{PH03}. If $F_{RSS}$ is non-singular, the CRLB on the covariance of $(\hat
x, \hat y)$, denoted $C_{RSS}$, equals $F_{RSS}^{-1}$ and satisfies
\begin{equation}\label{eqn:lowerbound}
E_{\mathbf{\upsilon}}(\hat x^2+\hat y^2)\geq Tr(C_{RSS}).
\end{equation}
That is, $Tr(C_{RSS})$ is the {\em scalar metric} for the performance limit with the expression (see Appendix
\ref{sec:trace}):
\begin{equation}\label{eqn:trcrlb}
Tr(C_{RSS})=\frac{1}{b}\left( \frac{ \sum_{i=1}^n \frac{1}{d_i^2}}{ \sum_{1\leq i<j\leq n}
\frac{\sin^2(\alpha_i-\alpha_j)}{d_i^2d_j^2}}\right).
\end{equation}

\begin{rem}
Since $Tr(C_{RSS})$ is a function of random variables $\mathbf\omega$, $Tr(C_{RSS})$ itself is a random variable. As
such, by the distribution of $Tr(C_{RSS})$, $E_{\mathbf\omega}(Tr(C_{RSS}))$ and $Std_{\mathbf\omega}(Tr(C_{RSS}))$, we
can statistically investigate the performance limit of single-hop sensor localization over a family of random anchor
locations other than a specific localization problem with given anchor locations. Taking expectations with respect to
$\mathbf\omega$ on both sides of (\ref{eqn:lowerbound}), we can obtain
\begin{equation}
E_{\mathbf\upsilon,\mathbf\omega}(\hat x^2+\hat y^2)\geq E_{\mathbf\omega}(Tr(C_{RSS})),
\end{equation}
namely that $E_{\mathbf\omega}(Tr(C_{RSS}))$ is a lower limit on the performance of localizing one sensor if the
spatial distribution of a fixed number of anchors (with undetermined positions) is known. Considering the involvement
of $E_{\mathbf\omega}(Tr(C_{RSS}))$ and $Std_{\mathbf\omega}(Tr(C_{RSS}))$, we derive a theorem stating their
convergence.
\begin{thm}\label{thm:moments}
Let $Tr(C_{RSS})$ be defined by (\ref{eqn:trcrlb}).
\begin{itemize}
\item If $n\geq5$, then $E_{\mathbf\omega}(Tr(C_{RSS}))<\infty$;
\item If $n\geq9$, then $E_{\mathbf\omega}((Tr(C_{RSS}))^2)<\infty$.
\end{itemize}
\end{thm}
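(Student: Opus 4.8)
The plan is to bound $Tr(C_{RSS})$ pointwise by a deterministic multiple of $1/S$, where $S:=\sum_{1\le i<j\le n}\sin^{2}(\alpha_i-\alpha_j)$ is the angular factor in the denominator of (\ref{eqn:trcrlb}), and then to show that $E_{\mathbf\omega}(S^{-k})<\infty$ for $k=1$ (resp.\ $k=2$) under the stated lower bounds on $n$.

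First I would use Assumption~\ref{asp:rss1}: every $d_i$ lies in $[R_0,R]$, so in (\ref{eqn:trcrlb}) the numerator satisfies $\sum_i d_i^{-2}\le nR_0^{-2}$ while the denominator satisfies $\sum_{i<j}\sin^{2}(\alpha_i-\alpha_j)/(d_i^{2}d_j^{2})\ge R^{-4}S$. On the event $\{S>0\}$ --- which has probability one, since $S=0$ forces all anchors to be collinear with the sensor --- this yields
\[
Tr(C_{RSS})\ \le\ \frac{nR^{4}}{b\,R_0^{2}}\,\frac1S ,
\]
so that $E_{\mathbf\omega}\!\big((Tr(C_{RSS}))^{k}\big)\le\big(nR^{4}/(bR_0^{2})\big)^{k}\,E_{\mathbf\omega}(S^{-k})$, and it suffices to prove $E_{\mathbf\omega}(S^{-1})<\infty$ for $n\ge5$ and $E_{\mathbf\omega}(S^{-2})<\infty$ for $n\ge9$. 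Note that this step discards the $d_i$ altogether: the problem is now purely about the i.i.d.\ uniform angles $\alpha_1,\dots,\alpha_n$ of (\ref{eqn:pdfthetai}).

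The crux is a small-ball estimate for $S$. Since $S$ is a sum of nonnegative terms, $\{S\le t\}\subseteq\{\sin^{2}(\alpha_1-\alpha_j)\le t\ \text{for all }j=2,\dots,n\}$. The differences $\alpha_1-\alpha_2,\dots,\alpha_1-\alpha_n$, reduced modulo $2\pi$, are i.i.d.\ uniform on $[0,2\pi)$ (conditionally on $\alpha_1$ they are i.i.d.\ uniform, with a conditional law that does not depend on $\alpha_1$), so for $0<t\le1$,
\[
Pr\{S\le t\}\ \le\ \prod_{j=2}^{n}Pr\{\sin^{2}(\alpha_1-\alpha_j)\le t\}\ =\ \Big(\tfrac{2}{\pi}\arcsin\sqrt t\,\Big)^{\,n-1}\ \le\ t^{(n-1)/2},
\]
where the last step uses $\tfrac{2}{\pi}\arcsin x\le x$ on $[0,1]$ (convexity). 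As $S\le n(n-1)/2$ almost surely, I would then use $E_{\mathbf\omega}(S^{-k})=k\int_0^{\infty}t^{-k-1}Pr\{S<t\}\,dt$ and split at $t=1$: the tail contributes $k\int_{1}^{n(n-1)/2}t^{-k-1}\,dt<\infty$, and near $0$ the estimate above gives $k\int_0^{1}t^{(n-1)/2-k-1}\,dt<\infty$ whenever $(n-1)/2>k$, i.e.\ $n\ge2k+2$. Taking $k=1$ and $k=2$ covers $n\ge5$ and $n\ge9$ (in fact already $n\ge4$ and $n\ge6$), completing the argument.

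The one non-routine ingredient is the small-ball bound $Pr\{S\le t\}=O(t^{(n-1)/2})$, and inside it the observation that the joint event ``all $\sin^{2}(\alpha_1-\alpha_j)$ small'' factorizes because the angular differences from a fixed anchor are mutually independent and uniform; everything else is bookkeeping. I anticipate that the authors' own proof reaches the (sufficient but not sharp) thresholds $n\ge5,\ n\ge9$ by an analogous integrability check carried out on the $U$-statistic kernel that they introduce later in the paper.
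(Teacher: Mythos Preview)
Your argument is correct and in fact yields sharper thresholds than the paper records. The paper retains the distances in the denominator, setting $Y_n=\sum_{i<j}d_i^{-2}d_j^{-2}\sin^{2}(\alpha_i-\alpha_j)$, and invokes a density criterion from \cite{PC85}: if the pdf $f_n$ of $Y_n$ satisfies $f_n(0)=0$ and $f_n'(0)<\infty$, then $E(Y_n^{-1})<\infty$. It then verifies these two conditions by writing $f_n(0)$ and $f_n'(0)$ as limits of the form $\varepsilon^{-1}Pr\{Y_n\le\varepsilon\}$ and $(\varepsilon\xi)^{-1}Pr\{\xi<Y_n\le\xi+\varepsilon\}$, bounding the probabilities through chained inclusions $|\alpha_{j+1}-\alpha_j+k\pi|\le C\sqrt{\varepsilon}$ over a few consecutive indices; this delivers $n\ge5$ for the first moment and, by an analogous but unwritten extension, $n\ge9$ for the second. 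Your route is shorter and sharper: you discard the $d_i$ at the outset via the annulus bounds of Assumption~\ref{asp:rss1}, reduce to the purely angular $S$, and obtain the tail bound $Pr\{S\le t\}\le t^{(n-1)/2}$ by exploiting that the differences $\alpha_1-\alpha_j\pmod{2\pi}$, $j\ge2$, are i.i.d.\ uniform; the layer-cake identity then gives $E(S^{-k})<\infty$ already for $n\ge2k+2$, i.e.\ $n\ge4$ and $n\ge6$. The improvement comes from two choices: working directly with the tail of $S$ rather than routing through regularity of its density (the density criterion costs an extra power of $t$), and using the star of $n-1$ independent differences from a fixed anchor in place of the paper's chain of consecutive ones.
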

\begin{proof}
See Appendix \ref{sec:moments}.
\end{proof}
\end{rem}

\subsection{$U$-statistics}\label{sec:ustatistics}
$U$-statistics are very natural in statistical work, particularly in the context of independent and identically
distributed (i.i.d.) random variables, or more generally for exchangeable sequences. The origins of the theory of
$U$-statistics are traceable to the seminal paper \cite{Hoeffding1948}, which proved the Central Limit Theorems for
$U$-statistics. Following the publication of this seminal paper, the interest in this class of statistics steadily
increased, crystallizing into a well-defined and vigorously developing line of research in probability theory. The
formal definition for $U$-statistics is presented as follows:
\begin{definition}
Let $\{X_i, i=1,\cdots,n \}$ be i.i.d. $p$-dimensional random vectors. Let $h(x_1,\cdots,x_r)$ be a Borel function on
$\mathbb{R}^{r\times p}$ for a given positive integer $r$ ($\leq n$) and be symmetric in its arguments. A $U$-statistic
$U_n$ is defined by
\begin{equation}
   U_{n}=\frac{r!(n-r)!}{n!} \sum_{1\leq i_1<\cdots<i_r\leq n}h(X_{i_1},\cdots,X_{i_r})
\end{equation}
and $h(x_1,\cdots,x_r)$ is called the kernel of $U_n$.
\end{definition}

It is obvious that $Tr(C_{RSS})$ involves the ratio of two $U$-statistics according to (\ref{eqn:trcrlb}), which
inspires us to study $Tr(C_{RSS})$ through an asymptotic analysis based on the theory of $U$-statistics.

\section{Results with RSS Measurements}\label{sec:rss}
In this section, we endeavor to present an asymptotic analysis of $Tr(C_{RSS})$ based on the theory of $U$-statistics
on account of the difficulty in conducting an accurate and direct analysis.

\subsection{Theories}
First of all, we obtain the following lemma for processing the ratio of two $U$-statistics, which is the basis for
analyzing $Tr(C_{RSS})$ as well as the corresponding scalar metrics in other cases.
\begin{lem}\label{lem:expansion}
Given $\{X^{(1)}_i, i=1, \cdots, n\}$ and $\{X^{(2)}_i, i=1, \cdots, n\}$ where
\begin{itemize}
 \item $\{X^{(1)}_i, i=1, \cdots, n\}$ are i.i.d. random variables with bounded values;
 \item $\{X^{(2)}_i, i=1, \cdots, n\}$ are i.i.d. random variables with bounded values;
 \item $\{X^{(1)}_i, i=1, \cdots, n\}$ and $\{X^{(2)}_i, i=1, \cdots, n\}$ are mutually independent,
\end{itemize}
define two-dimensional vectors $X_i=[\begin{array}{cc}
       X^{(1)}_i & X^{(2)}_i
       \end{array}]^T$
($i=1,\cdots,n$) and two $U$-statistics
\begin{eqnarray}
% \nonumber to remove numbering (before each equation)
   T_n &=& \frac{1}{n}\sum_{i=1}^n X^{(1)}_i,  \\
   S_n &=& \frac{2}{n(n-1)} \sum_{1\leq i<j\leq n}\left[X^{(1)}_iX^{(1)}_j\sin^2(X^{(2)}_i-X^{(2)}_j)\right].
\end{eqnarray}

Then
%, as $n\to\infty$,
\begin{equation}
\frac{T_n}{S_n}=\frac{1}{m_1m_2}+\frac{2\sigma_1^2}{nm_1^3m_2}+M_n+ R_n\label{eqn:expansion}
\end{equation}
where $m_1=E(X^{(1)}_1)$, $ \sigma_1=Std(X^{(1)}_1)$, $m_2=E(\sin^2(X^{(2)}_1-X^{(2)}_2))$,  $R_n$ is the remainder term, and
\begin{eqnarray}
% \nonumber to remove numbering (before each equation)
%  \delta &=& \frac{2\sigma_1^2}{m_1^3m_2} \\
 M_n &=& \frac{2}{n}\sum_{i=1}^ng_1(X_i)+\frac{2}{n(n-1)}\sum_{1\leq i<j\leq
n}g_2(X_i,X_j),\label{eqn:mn}\\
 g_1(X_i) &=& \frac{m_1-X^{(1)}_i}{2m_1^2m_2}, \\
 g_2(X_i,X_j) &=&
 \frac{1}{m_1m_2}-\frac{X^{(1)}_i+X^{(1)}_j}{m_1^2m_2}+\frac{2X^{(1)}_iX^{(1)}_j}{m_1^3m_2}-\frac{X^{(1)}_iX^{(1)}_j\sin^2(X^{(2)}_i-X^{(2)}_j)}{m_1^3m_2^2}.
% R_n &=&  o^*_p((n\ln n)^{-1})
\end{eqnarray}
For any $\varepsilon>0$, as $n\to\infty$, $R_n$  satisfies
\begin{eqnarray}
 Pr\left\{|nR_n|\geq \varepsilon\right\} &=& O(n^{-1}), \label{eqn:rn2} \\
 Pr\left\{|n (\ln n) R_n|\geq \varepsilon\right\} &=& o(1). \label{eqn:rn3}
\end{eqnarray}
\end{lem}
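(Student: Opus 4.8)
The plan is to treat $T_n/S_n$ by a second-order delta method for $U$-statistics: expand the smooth map $f(t,s)=t/s$ about the mean vector $(E[T_n],E[S_n])$, collect the first-order terms into $M_n$, peel off the deterministic $O(n^{-1})$ bias (which is governed by $\mathrm{Cov}(T_n,S_n)$ and the curvature of $f$), and bound the rest. First I would record $E[T_n]=m_1$ and, by the mutual independence of $\{X^{(1)}_i\}$ and $\{X^{(2)}_i\}$, $E[S_n]=m_1^2m_2=:\mu_2$ (assume $m_1\ne 0$ and $m_2\ne 0$, as the expansion requires; WLOG $m_1>0$). Since $T_n$ and $S_n$ are $U$-statistics with bounded kernels, Hoeffding's exponential inequality for $U$-statistics gives $Pr\{|T_n-m_1|>\delta\}\le 2e^{-cn\delta^2}$ and $Pr\{|S_n-\mu_2|>\delta\}\le 2e^{-c\lfloor n/2\rfloor\delta^2}$. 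Taking $\delta=\mu_2/2$ I define the good event $G_n:=\{|S_n-\mu_2|\le\mu_2/2\}\cap\{|T_n-m_1|\le m_1/2\}$, so $Pr(G_n^{c})\le Ce^{-c'n}=o(n^{-k})$ for every $k$ and $S_n\ge\mu_2/2>0$ on $G_n$. As $G_n^{c}$ is exponentially unlikely, it contributes negligibly to every probability bound claimed for $R_n$, so it suffices to analyze $R_n$ on $G_n$.

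On $G_n$ I would use the exact identity $1/s=1/\mu_2-(s-\mu_2)/\mu_2^2+(s-\mu_2)^2/(\mu_2^2 s)$ to write $T_n/S_n=1/(m_1m_2)+A_n+B_n$, with linear part $A_n:=(T_n-m_1)/\mu_2-m_1(S_n-\mu_2)/\mu_2^2$ and quadratic-and-higher part $B_n:=-(T_n-m_1)(S_n-\mu_2)/\mu_2^2+T_n(S_n-\mu_2)^2/(\mu_2^2 S_n)$. Two computations then drive the proof. (i) A purely algebraic identity: expanding $g_1,g_2$ in the definition of $M_n$ and rearranging the sums (using $\sum_{i<j}(X^{(1)}_i+X^{(1)}_j)=(n-1)\sum_iX^{(1)}_i$, $2\sum_{i<j}X^{(1)}_iX^{(1)}_j=(\sum_iX^{(1)}_i)^2-\sum_i(X^{(1)}_i)^2$, and the definition of $S_n$) gives $A_n-M_n=-\frac{2}{m_1^3m_2}D_n$, where $D_n:=\frac{2}{n(n-1)}\sum_{i<j}(X^{(1)}_i-m_1)(X^{(1)}_j-m_1)$ is a canonical (first-order degenerate) $U$-statistic. (ii) The Hoeffding decomposition of $S_n$: its symmetric kernel $h(x,y)=x^{(1)}y^{(1)}\sin^2(x^{(2)}-y^{(2)})$ has centered first projection $E[h(x,X_2)]-\mu_2=m_1x^{(1)}E[\sin^2(x^{(2)}-X^{(2)}_2)]-m_1^2m_2=\frac{m_1}{2}(x^{(1)}-m_1)$ (using $E[\sin^2(u-X^{(2)}_2)]=\frac12$ for every $u$), whence $S_n-\mu_2=m_1(T_n-m_1)+\widetilde D_n$ with $\widetilde D_n$ completely degenerate of size $O_p(n^{-1})$. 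Inserting this into $B_n$ and discarding products that are $O_p(n^{-3/2})$ --- each pairs $T_n-m_1=O_p(n^{-1/2})$ with a degenerate factor $O_p(n^{-1})$, or the Taylor remainder of $1/S_n$ with $(S_n-\mu_2)^2$ --- leaves $B_n=\frac{(1-m_2)(T_n-m_1)^2}{m_1^3m_2^3}+O_p(n^{-3/2})$ on $G_n$.

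Assembling, $R_n=(A_n-M_n)+B_n-\frac{2\sigma_1^2}{nm_1^3m_2}=-\frac{2}{m_1^3m_2}D_n+B_n-\frac{2\sigma_1^2}{nm_1^3m_2}$. From $D_n=(T_n-m_1)^2-\frac1{n-1}\bigl(Q_n-(T_n-m_1)^2\bigr)$ with $Q_n:=\frac1n\sum_i(X^{(1)}_i-m_1)^2\to\sigma_1^2$ one gets $D_n+\frac{\sigma_1^2}{n}=(T_n-m_1)^2+O_p(n^{-3/2})$, so $-\frac{2}{m_1^3m_2}D_n-\frac{2\sigma_1^2}{nm_1^3m_2}=-\frac{2(T_n-m_1)^2}{m_1^3m_2}+O_p(n^{-3/2})$; since $m_2=\frac12$ we have $\frac{1-m_2}{m_2^3}=\frac{2}{m_2}$, so this exactly cancels the quadratic part of $B_n$, leaving $R_n=O_p(n^{-3/2})$ on $G_n$. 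To convert this into the stated bounds I would establish the estimate $E[(R_n\mathbf{1}_{G_n})^2]=O(n^{-3})$ from the $L^p$-boundedness of bounded-kernel $U$-statistics (rate $O(n^{-1/2})$ for non-degenerate, $O(n^{-1})$ for degenerate of order two; products via Cauchy--Schwarz, e.g. $\|T_n-m_1\|_4=O(n^{-1/2})$, $\|\widetilde D_n\|_4=O(n^{-1})$; the $1/S_n$ factor bounded by $2/\mu_2$ on $G_n$). Chebyshev's inequality then yields $Pr\{|nR_n|\ge\varepsilon\}\le\varepsilon^{-2}n^2 E[(R_n\mathbf{1}_{G_n})^2]+Pr(G_n^{c})=O(n^{-1})$ and $Pr\{|n(\ln n)R_n|\ge\varepsilon\}\le\varepsilon^{-2}(n\ln n)^2 O(n^{-3})+O(e^{-c'n})=O\bigl((\ln n)^2/n\bigr)=o(1)$, which are precisely the two assertions.

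The step I expect to be the main obstacle is exactly this cancellation at order $n^{-1}$: showing that the three separate $O_p(n^{-1})$ contributions --- the \emph{defect} $A_n-M_n$, the genuinely quadratic $B_n$, and the deterministic bias $\frac{2\sigma_1^2}{nm_1^3m_2}$ --- assemble into only an $O_p(n^{-3/2})$ remainder. This relies on the exact form of the Hoeffding projection of $S_n$ (hence on $E[\sin^2(u-X^{(2)}_2)]$ being constant in $u$), and on pinning $D_n$ and $\widetilde D_n$ against $(T_n-m_1)^2$ down to the precision $O_p(n^{-3/2})$; getting the deterministic $1/n$ coefficient to come out exactly $\frac{2\sigma_1^2}{m_1^3m_2}$ is the delicate bit and is what forces that particular correction term into the statement. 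By contrast, the concentration step and the $U$-statistic moment estimates, though laborious, are routine, since every kernel is bounded.
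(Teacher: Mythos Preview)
Your approach is correct and it is genuinely different from the paper's. The paper does not organize the argument around a single cancellation at order $n^{-1}$. Instead it (i) applies a third-order Taylor expansion $S_n^{-1}=m_s^{-1}-m_s^{-2}(S_n-m_s)+m_s^{-3}(S_n-m_s)^2-(m_s+\vartheta_n)^{-4}(S_n-m_s)^3$, (ii) writes out $(S_n-m_s)$ and $(S_n-m_s)^2$ via the $H$-decomposition of $S_n$, and (iii) checks, term by term and via Markov's inequality together with a moment bound of the form $E|n^{-r}\sum h|^q\le a n^{-q/2}$, that every piece other than those displayed in $M_n$ and $\frac{2\sigma_1^2}{nm_1^3m_2}$ individually satisfies the two probability conditions on $R_n$. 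No localization to a good event and no second-moment/Chebyshev step are used. What your route buys is transparency: you see explicitly that $A_n-M_n$, $B_n$ and the deterministic $O(n^{-1})$ bias combine to produce a true $O_p(n^{-3/2})$ remainder, and the $L^2$ estimate plus Hoeffding concentration replaces a long sequence of per-term checks. What the paper's route buys is that it never needs to exhibit the delicate cancellation you correctly flag as the hard step; each discarded piece is handled independently, following the template of Maesono (1998).

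One slip worth fixing: the first Hoeffding projection of $S_n$ is $h_1(x)=m_1m_2\,(x^{(1)}-m_1)$, so $S_n-\mu_2=2m_1m_2(T_n-m_1)+\widetilde D_n$, not $m_1(T_n-m_1)+\widetilde D_n$. You recovered from this by invoking $E[\sin^2(u-X^{(2)}_2)]=\tfrac12$ and then $m_2=\tfrac12$, but the extra assumption is unnecessary: with the correct factor $2m_1m_2$, the quadratic part of $B_n$ works out to $\dfrac{2(T_n-m_1)^2}{m_1^3m_2}$ for arbitrary $m_2$, which still cancels exactly against $-\dfrac{2}{m_1^3m_2}D_n-\dfrac{2\sigma_1^2}{nm_1^3m_2}$. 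The remaining hypothesis you do need, that $E[\sin^2(u-X^{(2)}_2)]$ is constant in $u$, is also used tacitly in the paper's formula $\zeta_1(X_i)=(X^{(1)}_i-m_1)m_1m_2$, so on that point you and the paper are on equal footing.
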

\begin{proof}
See Appendix \ref{sec:proofofexpansion}.
\end{proof}

\begin{rem}
By Lemma \ref{lem:expansion}, the ratio of two $U$-statistics (i.e. $\frac{T_n}{S_n}$) can be expanded into a linear
expression consisting of one constant, one term converging to $0$ with the rate $O(n^{-1})$, one $U$-statistic (i.e.
$M_n$) with mean $0$ and one remainder term (i.e. $R_n$) converging to $0$ in probability.
\end{rem}

Obviously, $Tr(C_{RSS})$ can be expanded based on Lemma \ref{lem:expansion}. By letting $X_i^{(1)}=\frac{1}{d_i^2}$ and
$X_i^{(2)}=\alpha_i$,  we can derive $m_2=0.5$, and
\begin{eqnarray}
% \nonumber to remove numbering (before each equation)
  m_1 &=& 2\left( \frac{\ln\frac{R}{R_0}}{R^2-R_0^2}\right), \label{eqn:m1}\\
  \sigma_1 &=& \sqrt{\frac{1}{R_0^2R^2}-\left(\frac{2\ln\frac{R}{R_0}}{R^2-R_0^2}\right)^2}. \label{eqn:sigma1}
\end{eqnarray}
One of our main results is further summarized as follows.
\begin{thm}\label{thm:rssconvergence}
Use the same notations $b$ and $Tr(C_{RSS})$ as in Section \ref{sec:formulation} and $m_1$ and $\sigma_1$ as defined by
(\ref{eqn:m1}) and (\ref{eqn:sigma1}). Define a sequence of random variables
\begin{equation}
   W_n=
   \left(\frac{\sqrt{n}(n-1)bm_1^2}{4\sigma_1}\right)Tr(C_{RSS})-\frac{\sqrt{n}m_1}{\sigma_1}-\frac{2\sigma_1}{\sqrt{n}m_1}.
\end{equation}
Then, as $n\to\infty$, $W_n$ converges in distribution to a standard normal random variable.
\end{thm}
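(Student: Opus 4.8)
The plan is to derive Theorem~\ref{thm:rssconvergence} from Lemma~\ref{lem:expansion} followed by a classical central limit argument, with Slutsky's theorem gluing the pieces together. First I would set $X_i^{(1)}=1/d_i^2$ and $X_i^{(2)}=\alpha_i$. By Assumption~\ref{asp:rss1} the pairs $(d_i,\alpha_i)$ are i.i.d., $d_i$ and $\alpha_i$ are mutually independent, and the variables are bounded ($1/R^2\le X_i^{(1)}\le 1/R_0^2$ and $0\le X_i^{(2)}<2\pi$), so the hypotheses of Lemma~\ref{lem:expansion} are met. From \eqref{eqn:trcrlb} and the definitions of $T_n$ and $S_n$ one gets $Tr(C_{RSS})=\frac{2}{b(n-1)}\cdot\frac{T_n}{S_n}$, while a direct computation gives $m_2=E(\sin^2(\alpha_1-\alpha_2))=\tfrac12$ (since $\sin^2$ has period $\pi$ and $\alpha_2$ is uniform on $[0,2\pi)$), together with $m_1,\sigma_1$ as in \eqref{eqn:m1}--\eqref{eqn:sigma1}; by Theorem~\ref{thm:moments} the relevant moments are finite for the $n$ in question.

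Next I would substitute the expansion \eqref{eqn:expansion} into the definition of $W_n$. Plugging $m_2=\tfrac12$ yields $\frac{T_n}{S_n}=\frac{2}{m_1}+\frac{4\sigma_1^2}{nm_1^3}+M_n+R_n$, and multiplying through by $\frac{\sqrt n(n-1)bm_1^2}{4\sigma_1}\cdot\frac{2}{b(n-1)}=\frac{\sqrt n m_1^2}{2\sigma_1}$ one checks that the two deterministic terms reproduce exactly $\frac{\sqrt n m_1}{\sigma_1}$ and $\frac{2\sigma_1}{\sqrt n m_1}$. Hence after the subtractions in the definition of $W_n$ one is left with the clean identity $W_n=\frac{\sqrt n m_1^2}{2\sigma_1}\,(M_n+R_n)$. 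The remainder is controlled by \eqref{eqn:rn2}: since $Pr\{|nR_n|\ge\varepsilon\}=O(n^{-1})\to0$, we have $nR_n\xrightarrow{p}0$, hence $\sqrt n R_n=n^{-1/2}(nR_n)\xrightarrow{p}0$, so $\frac{\sqrt n m_1^2}{2\sigma_1}R_n=o_p(1)$.

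It then remains to show $\frac{\sqrt n m_1^2}{2\sigma_1}M_n$ converges in distribution to a standard normal. Recall from \eqref{eqn:mn} that $M_n=\frac{2}{n}\sum_{i=1}^n g_1(X_i)+\frac{2}{n(n-1)}\sum_{i<j}g_2(X_i,X_j)$. For the degree-two part I would verify that $g_2$ is \emph{degenerate}, i.e. $E(g_2(X_i,X_j)\mid X_i)=0$: taking the conditional expectation over $X_j$, using the independence of $d_j$ and $\alpha_j$ and $E(\sin^2(\alpha_i-\alpha_j)\mid\alpha_i)=m_2=\tfrac12$, the four terms of $g_2$ cancel. A bounded canonical kernel then gives $\mathrm{Var}\!\big(\frac{2}{n(n-1)}\sum_{i<j}g_2\big)=O(n^{-2})$, so $\sqrt n$ times this part tends to $0$ in $L^2$, hence in probability. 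For the linear part, $\{g_1(X_i)\}$ are i.i.d., bounded, and $E g_1(X_1)=\frac{m_1-E X_1^{(1)}}{2m_1^2m_2}=0$, so the ordinary CLT applies; with $m_2=\tfrac12$ one has $g_1(X_1)=(m_1-X_1^{(1)})/m_1^2$, whence $\frac{m_1^2}{2\sigma_1}\cdot\frac{2}{\sqrt n}\sum_{i=1}^n g_1(X_i)=\frac{1}{\sqrt n\,\sigma_1}\sum_{i=1}^n(m_1-X_i^{(1)})$, which converges in distribution to $N(0,1)$ because $\mathrm{Var}(X_1^{(1)})=\sigma_1^2$. Combining with the $o_p(1)$ remainder via Slutsky's theorem gives $W_n\xrightarrow{d}N(0,1)$.

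The main obstacle is the degeneracy check for $g_2$: one must show the degree-two $U$-statistic inside $M_n$ contributes only at order $n^{-1}$, not $n^{-1/2}$, since otherwise the limit would pick up an extra (non-Gaussian) term. This is exactly what makes $W_n$ asymptotically normal, and it hinges on the uniformity of the bearings and the independence of $d_i$ from $\alpha_i$ in Assumption~\ref{asp:rss1}; a secondary, routine point is verifying the boundedness hypotheses so that Lemma~\ref{lem:expansion} and the finiteness of the limiting variance both apply.
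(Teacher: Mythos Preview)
Your proof is correct and follows essentially the same route as the paper: write $Tr(C_{RSS})=\frac{2}{b(n-1)}\,T_n/S_n$, apply Lemma~\ref{lem:expansion} to obtain $W_n=\frac{\sqrt{n}\,m_1^2}{2\sigma_1}(M_n+R_n)$, dispose of $R_n$ via the tail bound, and conclude by a CLT for $M_n$ combined with Slutsky. The only difference is cosmetic: the paper invokes the off-the-shelf CLT for non-degenerate $U$-statistics (Theorem~1, p.~76 of \cite{Lee90}) with first-projection variance $\sigma_g^2=\mathrm{Var}(g_1(X_1))=\sigma_1^2/m_1^4$, whereas you unpack that theorem by hand---verifying the degeneracy of $g_2$ so the degree-two part is $o_p(n^{-1/2})$ and applying the ordinary CLT to the linear part---arriving at the same limit.
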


\begin{proof}
See Appendix \ref{sec:proofofrssconvergence}.
\end{proof}

\begin{rem}\label{rem:distribution}
In view of the affine relationship between $W_n$ and $Tr(C_{RSS})$, it is straightforward that $Tr(C_{RSS})$ is
asymptotically normal. Therefore, given a sufficiently large $n$, the distribution of $Tr(C_{RSS})$ can be approximated by the
following normal distribution
\begin{equation}\label{eqn:approximate_distribution}
    \mathcal{N}\left(\frac{4}{(n-1)bm_1}+\frac{8\sigma_1^2}{n(n-1)bm_1^3},\left(\frac{4\sigma_1}{\sqrt{n}(n-1)bm_1^2}\right)^2\right).
\end{equation}
Most importantly, the above normal random variable enables us to analytically study the performance limit, i.e.
$Tr(C_{RSS})$. Firstly, we can obtain a comprehensive knowledge about how $Tr(C_{RSS})$ is statistically distributed
and how $Tr(C_{RSS})$ is dependent on $n$. Secondly, using the normal distribution function from
(\ref{eqn:approximate_distribution}), we can compute the probability that $Tr(C_{RSS})$ is below a given threshold for
a known value of $n$; in turn, we can determine a threshold such that $Tr(C_{RSS})$ is below the threshold with a
certain confidence level, say $0.99$; in addition, we can find the minimum $n$ such that $Tr(C_{RSS})$ is below a given
threshold with a certain confidence level. Such analysis is undoubtedly helpful for the design and deployment of sensor
networks. Lastly, the moments of $Tr(C_{RSS})$ can be approximated by the corresponding moments of the normal variable
defined by (\ref{eqn:approximate_distribution}), namely,
\begin{eqnarray}
 E_{\mathbf{\omega}}(Tr(C_{RSS}))  &\approx&  \frac{4}{(n-1)bm_1}+\frac{8\sigma_1^2}{n(n-1)bm_1^3}, \label{eqn:etrrss3}\\
Std_{\mathbf{\omega}}(Tr(C_{RSS})) &\approx&  \frac{4\sigma_1}{\sqrt{n}(n-1)bm_1^2},\label{eqn:stdtrrss2}
\end{eqnarray}
which characterize the relationship among the mean and standard deviation of $Tr(C_{RSS})$, the number of the anchors,
the noise statistics of the RSS measurements and the spatial distributions of the anchors.
\end{rem}

A natural question arises as to how large $n$ should be to obtain a good approximation; this gives rise to the
convergence rate study. In the literature of $U$-statistics, the Berry-Esseen bound was developed for characterizing
the convergence rates of $U$-statistics \cite{GS73,CJV80}. Considering the fact that $W_n$ is affine to a $U$-statistic
(i.e. $M_n$) according to the proof of Theorem $\ref{thm:rssconvergence}$, we propose the following theorem describing
the convergence rate of $W_n$ in the way similar to the Berry-Esseen bound.

\begin{thm}\label{thm:rssrate}
Use the notations in Theorem \ref{thm:rssconvergence} and define
\begin{equation}
   \nu_3=E\left(\left(\frac{1}{d_1^2}-m_1\right)^3\right).
\end{equation}
Then, as $n\to\infty$,
\begin{equation}\label{eqn:beb}
   \sup_x\left|F_n(x)-\Phi(x)\right|\leq
   \left|\left(\frac{\nu_3+\frac{2\sigma_1^4}{m_1}}{6\sigma_1^{3}}\right)\frac{(x^2-1)e^{-\frac{x^2}{2}}}{\sqrt{2\pi}}\right|n^{-\frac{1}{2}}+O(n^{-1})
\end{equation}
where $F_n(x)$ is the distribution function of $W_n$ and $\Phi(x)$ is the standard normal distribution function.
\end{thm}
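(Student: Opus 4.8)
The plan is to reduce $W_n$ to an affine image of the degree‑two $U$‑statistic $M_n$ of Lemma~\ref{lem:expansion} and then mimic the classical Berry--Esseen/Edgeworth argument for $U$‑statistics along the lines of \cite{GS73,CJV80}, the extra work being the characteristic‑function/smoothing estimate that pins down the explicit $n^{-1/2}$ coefficient. First I would make the reduction explicit. With $X_i^{(1)}=1/d_i^2$, $X_i^{(2)}=\alpha_i$, formula (\ref{eqn:trcrlb}) reads $Tr(C_{RSS})=\frac{2}{b(n-1)}\,\frac{T_n}{S_n}$, so Lemma~\ref{lem:expansion} (with $m_2=\tfrac12$) and the definitions of $m_1,\sigma_1$ give, after the deterministic and $O(n^{-1})$ pieces built into $W_n$ cancel,
\[
W_n=\frac{\sqrt n\,m_1^2}{2\sigma_1}\bigl(M_n+R_n\bigr),\qquad
M_n=\frac{2}{n}\sum_{i=1}^n g_1(X_i)+\frac{2}{n(n-1)}\sum_{1\le i<j\le n}g_2(X_i,X_j).
\]
This is already the Hoeffding decomposition of $M_n$: $g_1$ is the H\'ajek projection kernel (i.i.d., mean zero, bounded), and one checks, using that $\alpha_i$ is uniform on $[0,2\pi)$ so that $E[\sin^2(\alpha_i-\alpha_j)\mid\alpha_i]=\tfrac12$, that $g_2$ is canonical ($E[g_2(X_i,X_j)\mid X_i]=0$). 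Since $2m_1^2m_2=m_1^2$, the linear part of $W_n$ is exactly $-\sigma_1^{-1}n^{-1/2}\sum_{i=1}^n(1/d_i^2-m_1)$, i.e.\ the standardized sample mean of the bounded i.i.d.\ variables $1/d_i^2$.

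Next I would establish the one‑term Edgeworth expansion for the ``clean'' part $V_n:=\frac{\sqrt n\,m_1^2}{2\sigma_1}M_n$. By Assumptions~\ref{asp:rss2}--\ref{asp:rss1} the variables $1/d_i^2$ have bounded support $[R^{-2},R_0^{-2}]$ (all moments finite) and an absolutely continuous law (the push‑forward of $f_d$), so Cram\'er's condition holds for the H\'ajek projection; hence Esseen's smoothing lemma applies and the contribution of $|t|$ bounded away from $0$ to the inversion integral is $o(n^{-1})$. Because $EV_n=0$ exactly and $\mathrm{Var}(V_n)=1+O(n^{-1})$ (the degenerate part contributes only $O(n^{-2})$ to the variance), the only $n^{-1/2}$ correction to the distribution function comes from the third cumulant, giving $\sup_x\bigl|P(V_n\le x)-\Phi(x)+\tfrac{\kappa_3(V_n)}{6}(x^2-1)\phi(x)\bigr|=O(n^{-1})$. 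Using the standard cumulant expansion $\kappa_3(M_n)=\frac{8}{n^2}\bigl(E[g_1(X_1)^3]+3\,E[g_1(X_1)g_1(X_2)g_2(X_1,X_2)]\bigr)+O(n^{-3})$ and evaluating these moments with $m_2=\tfrac12$, $E[\sin^2(\alpha_1-\alpha_2)]=\tfrac12$ and $E[(1/d_1^2)(1/d_1^2-m_1)]=\sigma_1^2$, one obtains $\kappa_3(V_n)$ in closed form; scaling back by $(\sqrt n\,m_1^2/2\sigma_1)^3$ assembles the coefficient $\dfrac{\nu_3+2\sigma_1^4/m_1}{6\sigma_1^3}$, where $\nu_3$ is the third central moment of $1/d_1^2$ and the $\sigma_1^4/m_1$ term originates in the linear--degenerate cross moment.

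Finally I would transfer the expansion from $V_n$ to $W_n$ by treating $\xi_n:=\frac{\sqrt n\,m_1^2}{2\sigma_1}R_n$ as a perturbation: for every $x$,
\[
|F_n(x)-P(V_n\le x)|\;\le\;\Pr\{|\xi_n|\ge\delta_n\}+\sup_x\bigl(P(V_n\le x+\delta_n)-P(V_n\le x-\delta_n)\bigr),
\]
and I would pick $\delta_n$ so that, combined with the probability bounds (\ref{eqn:rn2})--(\ref{eqn:rn3}) on $R_n$ and the uniform anti‑concentration bound on the law of $V_n$ coming out of the Edgeworth step, the right‑hand side is $O(n^{-1})$. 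This last step is the main obstacle: Lemma~\ref{lem:expansion} controls $R_n$ only in probability and only at rates $O(n^{-1})$ and $o((n\ln n)^{-1})$, whereas a clean $O(n^{-1})$ remainder in (\ref{eqn:beb}) requires these to beat the $O(n^{-1/2})$ anti‑concentration scale, so one must either sharpen the control on $R_n$ (e.g.\ show that the $O(n^{-1})$ terms absorbed into $R_n$ have vanishing mean and so cannot perturb the skewness‑level term) or be content with $o(n^{-1/2})$ in place of $O(n^{-1})$. The only other non‑routine point is the bookkeeping in the third‑cumulant computation---getting the combinatorial factors and signs in $\kappa_3(M_n)$ exactly right so that the final coefficient matches (\ref{eqn:beb}).
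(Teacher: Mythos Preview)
Your approach coincides with the paper's: write $W_n=\tfrac12\sigma_g^{-1}\sqrt n\,(M_n+R_n)$ via Lemma~\ref{lem:expansion}, obtain a one-term Edgeworth expansion for the normalized $M_n$ (the paper simply cites Theorem~1.2 of \cite{BGZ97} rather than rederiving it through Cram\'er's condition and Esseen smoothing), and then transfer to $W_n$ by exactly your perturbation inequality (Lemma~3, p.~16 of \cite{Petrov75}). Your worry about the choice of $\delta_n$ is resolved in the paper by taking $\delta_n=n^{-1/2}$, so that (\ref{eqn:rn2}) reads $Pr\{|nR_n|\ge c\}=O(n^{-1})$, with the residual shift $|\Phi(x\pm n^{-1/2})-\Phi(x)|$ asserted to be $O(n^{-1})$.
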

\begin{proof}
See Appendix \ref{sec:proofofrssrate}.
\end{proof}

\begin{rem}
Theorem \ref{thm:rssrate} shows that as $n\to\infty$, the density of $W_n$ converges to standard normality with the
rate $O(n^{-\frac{1}{2}})$. Additionally, it can be verified that the coefficient associated with $n^{-\frac{1}{2}}$ is
a function of the ratio $\frac{R}{R_0}$; that is to say, the convergence rate of the density of $W_n$ is not determined
by the individual values of $R_0$ and $R$, but by the ratio $\frac{R}{R_0}$.
\end{rem}

\subsection{Simulations}
In this subsection, we would like to carry out simulations to verify Theorem \ref{thm:rssconvergence} and the accuracy
of using (\ref{eqn:etrrss3}) and (\ref{eqn:stdtrrss2}). In the simulations, the parameters describing the wireless
channel, i.e. $\alpha$, $\sigma_{dB}$ and $R_0$, are set as $2.3$, $3.92$ and $1$ m, respectively, which were measured
in a real environment \cite{PH03}.

\begin{figure}
\centering

\subfigure[]{
\includegraphics[scale=0.8]{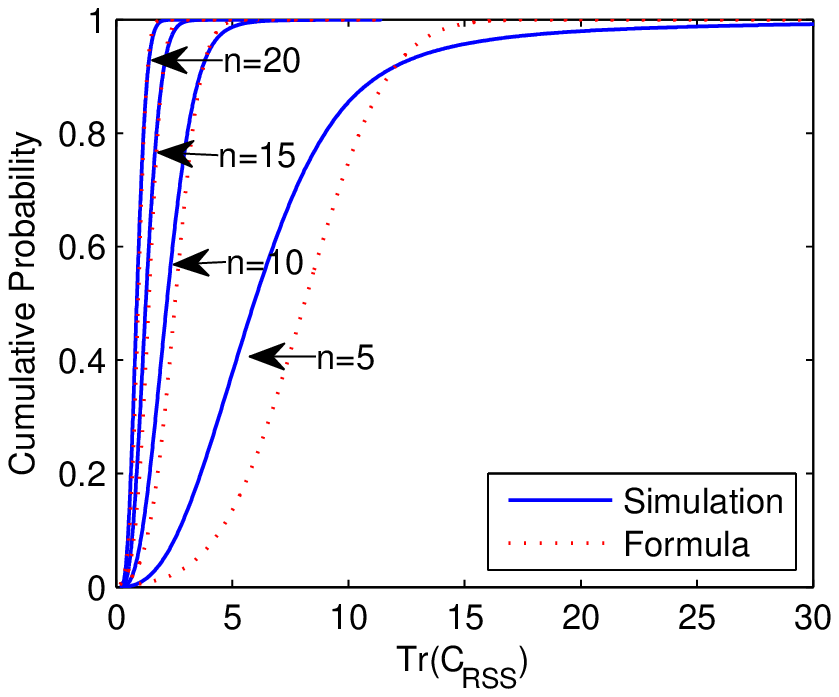}
\label{fig:rss_cdf}} \subfigure[]{
\includegraphics[scale=0.8]{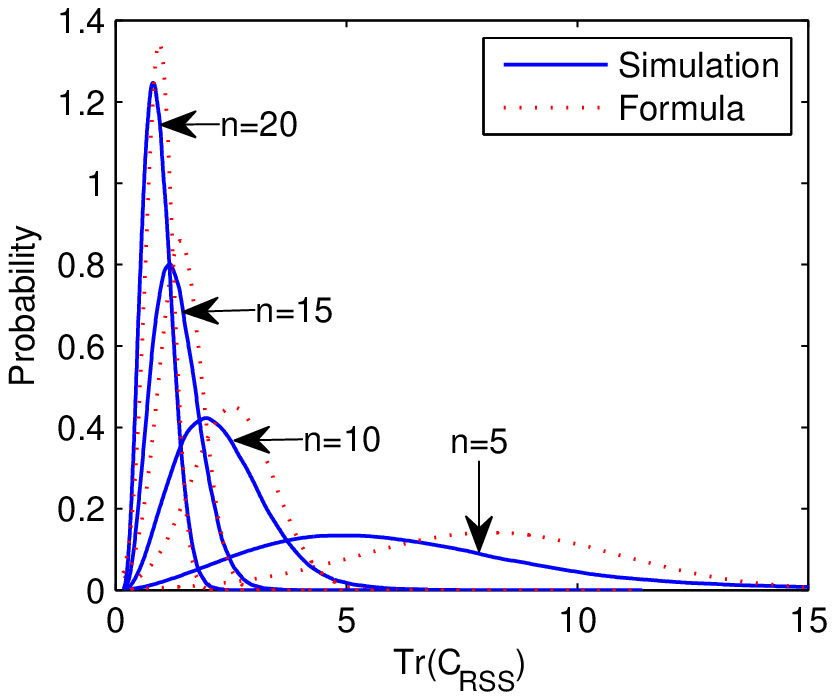}
\label{fig:rss_pdf}} \caption[]{The distribution functions and pdfs of $Tr(C_{RSS})$ with $\alpha=2.3,
\sigma_{dB}=3.92, R_0=1$ m and $R=10$ m.} \label{fig:distribution}
\end{figure}

In the first place, we plot the actual distribution functions of $Tr(C_{RSS})$ from simulations (with the legend
``Simulation'') and the normal distribution functions defined by (\ref{eqn:approximate_distribution}) (with the legend
``Formula'') for $n=5, 10, 15, 20$ in Fig.~\ref{fig:rss_cdf}. As can be seen, when $n=5$, the discrepancy between the
pair of distribution functions is quite obvious, but when $n=10$, the discrepancy becomes very small, and when $n=15$
or $20$, the discrepancy can be negligible. The gradually diminishing discrepancy arises for two reasons: the intrinsic
error in approximating a $U$-statistic by normality, and the existence of the remainder term $R_n$ which obeys
$Pr\left\{|R_n|\geq \frac{\varepsilon}{n}\right\} = O(n^{-1})$, see (\ref{eqn:rn2}), and though nonzero is neglected in
the calculation. In Fig.~\ref{fig:rss_pdf}, we plot and compare the pdfs: (i) the overall shapes of the actual pdfs
(with the legend ``Simulation'') are quite similar to normality; (ii) the discrepancy in between reduces with $n$
increasing too. All those observations are consistent with and in turn demonstrate Theorem \ref{thm:rssconvergence}.

\begin{figure*}
\centering

\subfigure[Simulation Mean]{
\includegraphics[scale=0.8]{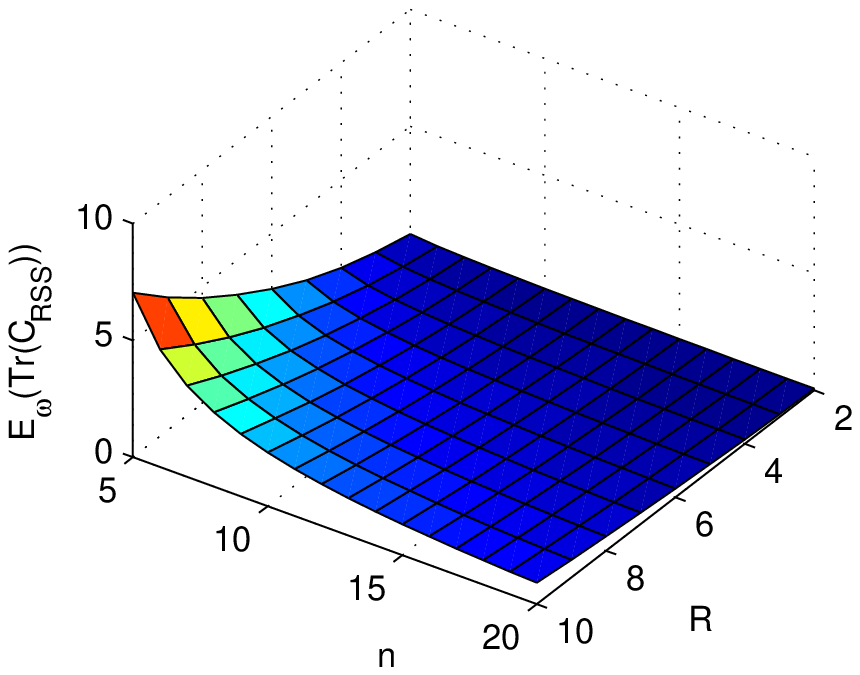}
\label{fig:rss_mean_s}}\subfigure[Simulation Standard Deviation]{
\includegraphics[scale=0.8]{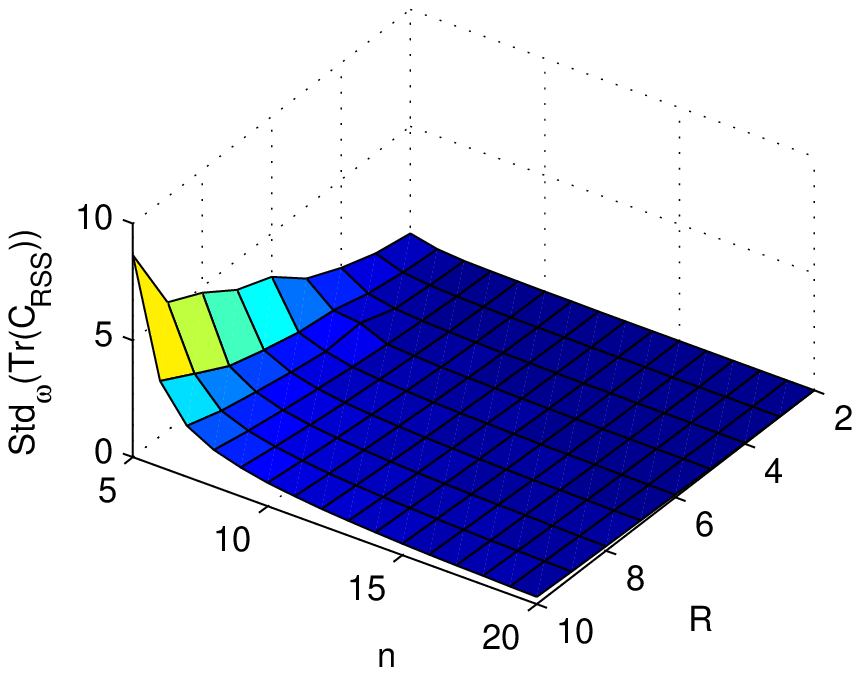}
\label{fig:rss_std_s}}

\subfigure[Analytical Mean]{
\includegraphics[scale=0.8]{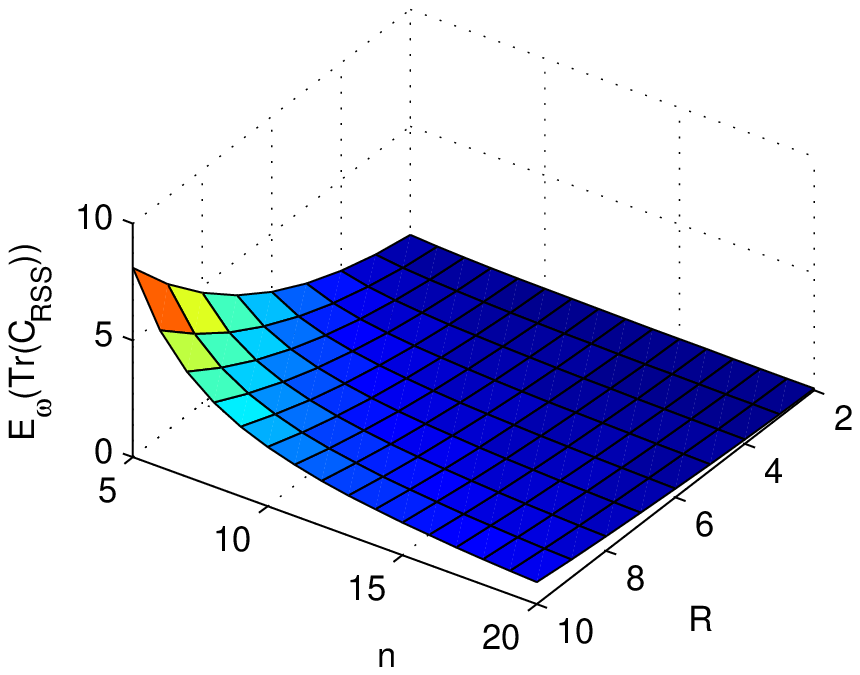}
\label{fig:rss_mean_a}}\subfigure[Analytical Standard Deviation]{
\includegraphics[scale=0.8]{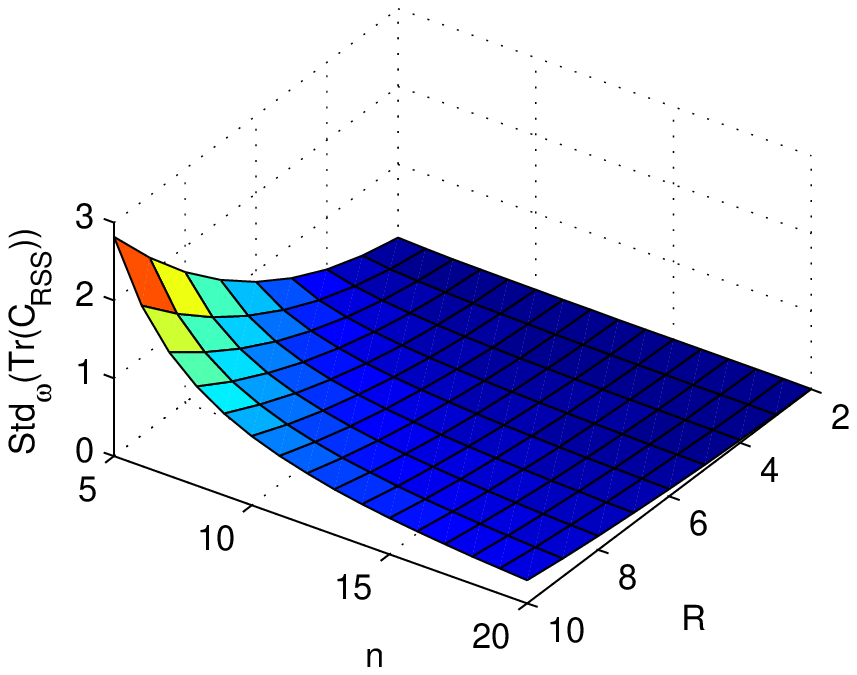}
\label{fig:rss_std_a}}

\subfigure[Mean]{
\includegraphics[scale=0.8]{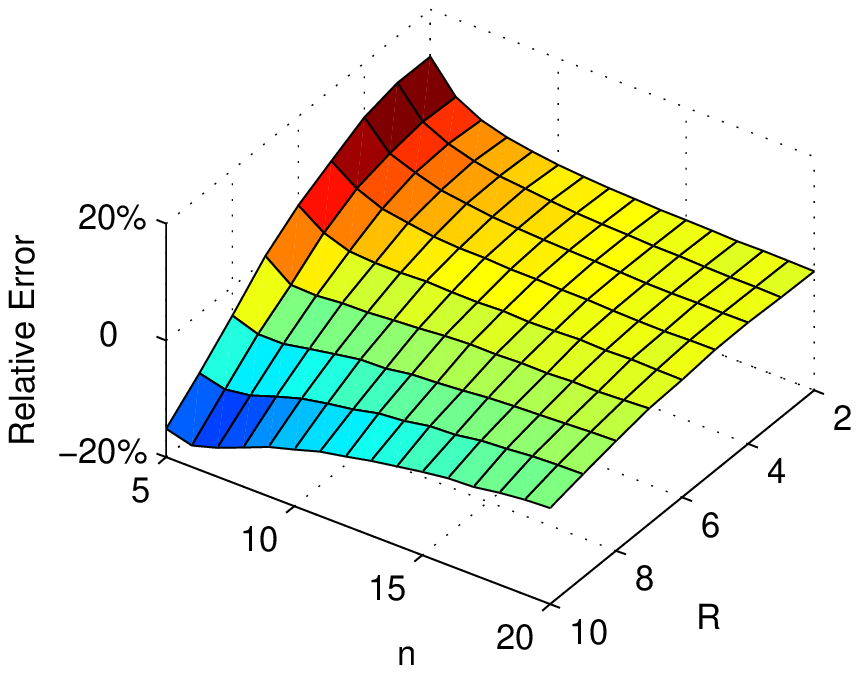}
\label{fig:rss_mean_re}}\subfigure[Standard deviation]{
\includegraphics[scale=0.8]{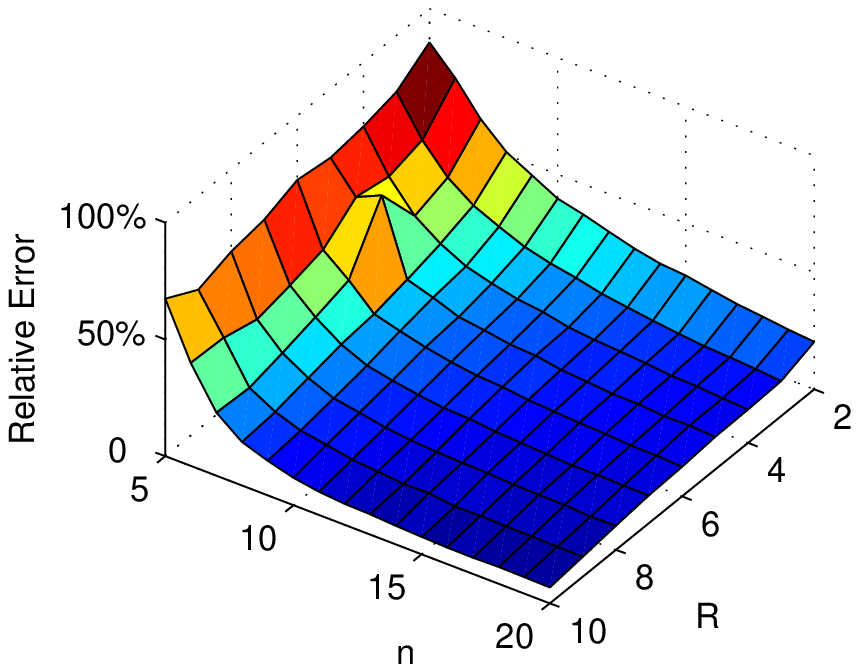}
\label{fig:rss_std_re}} \caption[]{The means and the standard deviations of $Tr(C_{RSS})$ from the simulations and the
formulas, and the corresponding relative errors, with $\sigma_{dB}=3.92, \alpha=2.3$ and $R_0=1$ m.} \label{fig:sim}
\end{figure*}

In the second place, we plot $E_{\mathbf\omega}(Tr(C_{RSS}))$ and $Std_{\mathbf\omega}(Tr(C_{RSS}))$ from both
simulations and the formulas (\ref{eqn:etrrss3}) and (\ref{eqn:stdtrrss2}) in Figs.~\ref{fig:rss_mean_s},
\ref{fig:rss_mean_a}, \ref{fig:rss_std_s} and \ref{fig:rss_std_a}, and evidently, the larger is $n$, the more accurate
are the formulas. Note that when $n=5$, the associated curve corresponding to the actual values of
$Std_{\mathbf\omega}(Tr(C_{RSS}))$ in Fig.~\ref{fig:rss_std_s} is non-smooth, and the most probable reason is that the
actual value of $Std_{\mathbf\omega}(Tr(C_{RSS}))$ is infinite when $n$ is as small as $5$. For better comparison, we
plot the \emph{relative errors}\footnote{Relative error is the ratio of the difference between the quantity from
simulations and that from a corresponding formula to the former one.} in Fig.~\ref{fig:rss_mean_re}
and~\ref{fig:rss_std_re}. It can be seen that: (i) $E_{\mathbf\omega}(Tr(C_{RSS}))$ is underestimated by
(\ref{eqn:etrrss3}) when $R$ is small, say $R=2$ m, but is overestimated by (\ref{eqn:etrrss3}) when $R$ is large, say
$R=10$ m, while the associated absolute value of the relative error decreases with $n$ increasing in most cases; (ii)
$Std_{\mathbf\omega}(Tr(C_{RSS}))$ is underestimated by (\ref{eqn:stdtrrss2}) and the associated absolute value of the
relative error decreases with increasing $n$ and $R$; (iii) suppose the absolute value of the relative error below
$10\%$ is acceptable: when $R=2$ m, (\ref{eqn:etrrss3}) is applicable if $n\geq6$, but (\ref{eqn:stdtrrss2}) is not
applicable even if $n=20$; when $R=10$ m, both (\ref{eqn:etrrss3}) and (\ref{eqn:stdtrrss2}) are applicable if
$n\geq11$.

In what follows, we present several useful remarks on the properties of sensor localization provided that
(\ref{eqn:etrrss3}) and (\ref{eqn:stdtrrss2}) are applicable. It is notable that in (\ref{eqn:etrrss3}) and
(\ref{eqn:stdtrrss2}), $E_{\mathbf\omega}(Tr(C_{RSS}))$ and $Std_{\mathbf\omega}(Tr(C_{RSS}))$ normalized by $R^2$ (or
$R_0^2$) are dependent upon the ratio $\frac{R}{R_0}$; hence, we simplify the discussion involving both $R_0$ and $R$
by letting $R_0=1$ m and only concentrating on $R$.

\begin{rem}
According to (\ref{eqn:etrrss3}),$E_{\mathbf\omega}(Tr(C_{RSS}))$ is in inverse proportion to $n$, and thus a
\emph{critical} value $n^*$ differing from the parameters $R_0, R, \sigma_{dB}$ and $\alpha$ can be determined, such
that having more anchors than $n^*$ contributes little to the quality of sensor localization.
%In addition, both the mean and the standard deviation scale with $\frac{\sigma_{dB}^2}{\alpha^2}$; the basic reason is that the distance measurement noises, which fundamentally impair
%the quality of sensor localization, degrade with $\sigma_{dB}$ increasing and $\alpha$ decreasing according to \cite{CD09}.
%quantitatively characterizes the average performance limit over all possible sensor-anchor geometries and is indicative
%for evaluating the average localization performance over a period of time and/or in a wide region. In addition
\end{rem}

\begin{rem}
It can be easily deduced that both (\ref{eqn:etrrss3}) and (\ref{eqn:stdtrrss2}) monotonically decrease with $R$
decreasing, as illustrated in Fig.~\ref{fig:rss_mean_a} and~\ref{fig:rss_std_a}; the reason is that long distance
measurements from RSS suffer greater errors, and thus produce worse localization performance. Therefore, given a fixed
$n$, distance measurements from a sensor are better made at locations as close to the sensor as possible. Moreover, it
turns out that using more distance measurements spread over a wide range is not necessarily better than using fewer
distance measurements but spread in a narrow range in terms of $E_{\mathbf\omega}(Tr(C_{RSS}))$. For instance,
$E_{\mathbf\omega}(Tr(C_{RSS}))$ is approximately $0.52431$ m$^2$ given $n=15$ and $R=6$ m, but is around $0.43174$
m$^2$ given $n=10$ and $R=4$ m. Thus, tradeoff should be made between the number of the anchors (i.e. $n$) and their
spreading (i.e. $R_0$ and $R$) in sensor localization.
\end{rem}

\begin{rem}
Though we have discussed the positive and negative impacts of increasing $n$ and $R$ on localization performance
separately, the variables are correlated in some situations, and so the impacts are related. Normally, increasing all
the transmission powers in a wireless sensor network enlarges the communication coverage of every node, and both $n$
and $R$ tend to rise; finally, a positive impact is embodied, since $Tr(C_{RSS})$ and its mean will definitely decrease
according to \cite{PH03}.
\end{rem}

\begin{rem}
The dispersion of the distribution of $Tr(C_{RSS})$ reflects its sensitivity to sensor-anchor geometries. Specifically,
as illustrated in Fig.~\ref{fig:rss_pdf}, with a large dispersion, say $n=5$, the chance of having two different
sensor-anchor geometries to lead to a big difference in the resulting values of $Tr(C_{RSS})$ is large, implying a
large sensitivity, and we should be careful about sensor-anchor geometries; by contrast, with a small dispersion, say
$n\geq15$, the chance is certainly small, so is the sensitivity, and there is less reason to worry about sensor-anchor
geometries even if the anchors are randomly deployed. Given a random variable, the coefficient of variation, defined to
be the ratio of its standard deviation to its mean, is a normalized measure of dispersion of its distribution.
Accordingly, the coefficient associated with $Tr(C_{RSS})$ has the order of $O(n^{-\frac{1}{2}})$; the less is the
coefficient, the smaller is the sensitivity.

\begin{comment}
, and the minimum, i.e. $0$, all the sensor-anchor geometries will result in one unique value of $Tr(C_{RSS})$, so that
the minimum sensitivity is attained
\end{comment}
\end{rem}

\subsection{Expansion to Bearing-only Localization}\label{sec:bearing} A bearing is the angle between a north-south line
and a line connecting a sensor to an anchor, and is measured in a clockwise direction. In bearing-only localization,
bearing measurements associated with one sensor and at least two anchors (which are not collinear with the sensor) are
required to determine the sensor location. We still consider the sensor and $n$ ($n\geq2$) anchors, as shown in
Fig.~\ref{fig:network}. But in our study, the angles $\{\alpha_i, i=1,\cdots,n\}$ are assumed to be measured as
bearings, due to the fact that this set of measurements is equivalent to the set of real bearing measurements as far as
our study is concerned. Henceforth, we make the following assumption as is commonly used in the studies on bearing-only
localization \cite{BFADP10,Dogancay05}.
\begin{assumption}\label{asp:bearing1}
The bearing measurements $\{\hat{\alpha}_i, i=1,\cdots,n\}$ are statistically independent and Gaussian with means
$\{\alpha_i, i=1,\cdots,n\}$ and the same variance $\sigma_{\alpha}^2$.
%, i.e.
%\begin{equation}
%\hat{\alpha}_i \sim \mathcal{n}(\alpha_i, \sigma_{\alpha}^2).
%\end{equation}
\end{assumption}

In bearing-only localization, any anchor should not overlap with the sensor; otherwise, the corresponding bearing
measurement will be invalid. To model a random sensor-anchor geometry, we assume that the $n$ anchors are randomly and
uniformly distributed inside an annulus centered at the sensor and defined by radii $R_0$ and $R$ ($R_0<R$), as in
Assumption \ref{asp:rss1} in Section \ref{sec:formulation}.
%Here, $R_0$ is not the close-in distance any more, but the lower bound on practical distances from the sensor to any anchors.
%It turns out that $\{d_i, i=1, \cdots, n\}$ and $\{\alpha_i, i=1, \cdots, n\}$ are mutually independent with the pdfs given by (\ref{eqn:pdfdi}) and (\ref{eqn:pdfthetai}).

Define $F_{B}$ to be the associated FIM in bearing-only localization, and according to \cite{BFADP10}, we have
\begin{equation}
F_{B}=\frac{1}{\sigma_{\alpha}^2}\left(
 \begin{array}{cc}
  \sum_{i=1}^n  \frac{\cos^2\alpha_i}{d_i^2} &  \sum_{i=1}^n  -\frac{\cos\alpha_i\sin\alpha_i}{d_i^2}\\
  \sum_{i=1}^n  -\frac{\cos\alpha_i\sin\alpha_i}{d_i^2}   &  \sum_{i=1}^n  \frac{\sin^2\alpha_i}{d_i^2}
 \end{array}
\right).
\end{equation}

\begin{rem}
Obviously, on replacing $\frac{1}{\sigma_{\alpha}^2}$ by $b$, $F_{B}$ will have the same form as the FIM in the RSS
case, i.e. $F_{RSS}$ in (\ref{eqn:frss}). Hence, all the conclusions about $Tr(C_{RSS})$ except those relevant to $b$
are still correct in bearing-only localization.
\end{rem}

\section{Results with TOA Measurements}\label{sec:toa}
Since multi-path fading due to reflection, diffraction and scattering of the radio signal causes variations in the
received power and severely reduces the accuracy of the RSS ranging approach, RSS-based localization normally achieves
low performance. By contrast, comparatively accurate distance estimates can be obtained using TOA measurements. In this
section, we extend our study in the TOA case.

\subsection{Problem Formulation in the TOA Case}
TOA refers to the travel time of a radio signal through a medium from a single transmitter to a remote single receiver.
By the relation between signal propagation speed in this medium, the time is a measure for the distance between
transmitter and receiver. Denote by $\{T_i, i=1,\cdots,n\}$ the measured TOA between the sensor and the $n$ anchors. As
in both theoretical studies \cite{CHC06,ZD10,XDD11} and experimental studies \cite{PH03} on TOA-based sensor
localization under line-of-sight conditions, we assume
\begin{assumption}\label{asp:toa2}
$\{T_i, i=1,\cdots,n\}$ are statistically independent and Gaussian with means $\{\frac{d_i}{c},i=1,\cdots,n\}$ ($c$ is
the speed of propagation) and same variance $\sigma_T^2$.
\end{assumption}

To model the random sensor-anchor geometry, we make the following assumption.
\begin{assumption}\label{asp:toa1}
The $n$ anchors are randomly and uniformly deployed within a circle of radius $R$ ($R>0$) centered at the sensor.
\end{assumption}

\begin{rem}
Because the TOA measurement model is still valid when the actual distance $d_i$ is $0$, we do not need to restrict
$d_i$ to be greater than $0$ as we do in the RSS case, and simply make Assumption \ref{asp:toa1}. It turns out that
$\{d_i, i=1,\cdots, n\}$ and $\{\alpha_i,i=1,\cdots,n\}$ are mutually independent; the only difference from the RSS
case is that the pdf of $d_i$ is $\frac{2d_i}{R^2}$.
\end{rem}

Similarly to the RSS case, we can obtain the FIM and the trace of the CRLB matrix in the TOA case as follows
\begin{eqnarray}
F_{TOA}  &=&  \frac{1}{\sigma_T^2c^2}\left(
 \begin{array}{cc}
  \sum_{i=1}^n  \cos^2\alpha_i &  \sum_{i=1}^n  \cos\alpha_i\sin\alpha_i\\
  \sum_{i=1}^n  \cos\alpha_i\sin\alpha_i   &  \sum_{i=1}^n  \sin^2\alpha_i \\
 \end{array}
\right), \\
Tr(C_{TOA})  &=&     \frac{ \sigma_T^2c^2 n }{ \sum_{1\leq i<j\leq n} \sin^2(\alpha_i-\alpha_j)  }.
\end{eqnarray}

Evidently, $Tr(C_{TOA})$ is a random variable and is independent of $\{d_i, i=1,\cdots,n\}$. Furthermore, the
sufficient conditions in Theorem \ref{thm:moments} also hold for the first and second moments of $Tr(C_{TOA})$. As
before, $Tr(C_{TOA})$ is used as the scalar metric for the performance limit in the TOA case.

\subsection{Theory}
By using Lemma \ref{lem:expansion}, we derive the following theorem.
\begin{thm}\label{thm:toaconvergence}
Let $\sigma_T$ and $Tr(C_{TOA})$ be the same variables as defined in the previous subsection and define a sequence of
random variables
\begin{equation}
V_n=\left(\frac{n(n-1)}{2\sigma_T^2c^2}\right)Tr(C_{TOA})-2n+2.
\end{equation}
Then, as $n\to\infty$, $V_n$ converges in distribution to a chi-square random variable of degree $2$.
\end{thm}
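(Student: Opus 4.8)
The plan is to avoid the ratio-of-$U$-statistics analysis and instead turn the denominator of $Tr(C_{TOA})$ into a genuine quadratic form, after which the ordinary bivariate CLT finishes the job. Observing that $Tr(C_{TOA})$ depends only on the bearings, I would use the identity $\sin^2(\alpha_i-\alpha_j)=\tfrac12-\tfrac12\cos2\alpha_i\cos2\alpha_j-\tfrac12\sin2\alpha_i\sin2\alpha_j$ to obtain, with $C_n=\sum_{i=1}^n\cos2\alpha_i$ and $D_n=\sum_{i=1}^n\sin2\alpha_i$,
\begin{equation}
\sum_{1\le i<j\le n}\sin^2(\alpha_i-\alpha_j)=\frac{n^2-C_n^2-D_n^2}{4},\qquad V_n=\frac{2n^2(n-1)}{n^2-C_n^2-D_n^2}-2n+2.
\end{equation}
The exceptional set $\{C_n^2+D_n^2=n^2\}$ (all $2\alpha_i$ equal modulo $2\pi$), on which $Tr(C_{TOA})$ is infinite, has probability zero and may be ignored for convergence in distribution.

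Next, since the $\alpha_i$ are i.i.d.\ uniform on $[0,2\pi)$, the planar vectors $(\cos2\alpha_i,\sin2\alpha_i)$ are i.i.d.\ with mean $0$ and covariance $\tfrac12 I_2$ (using $E\cos^2 2\alpha=E\sin^2 2\alpha=\tfrac12$ and $E[\cos2\alpha\sin2\alpha]=0$), so the Lindeberg--L\'evy CLT gives $n^{-1/2}(C_n,D_n)\xrightarrow{d}\mathcal N(0,\tfrac12 I_2)$, and hence, writing $Q_n=C_n^2+D_n^2$,
\begin{equation}
\frac{Q_n}{n}\xrightarrow{d}\tfrac12\chi_2^2 ,
\end{equation}
$\chi_2^2$ denoting a chi-square variable with two degrees of freedom. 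Because $Q_n/n^2=O_P(n^{-1})$, a first-order expansion of $x\mapsto(1-x)^{-1}$ gives $\dfrac{2n^2(n-1)}{n^2-Q_n}=2(n-1)+\dfrac{2(n-1)}{n}\cdot\dfrac{Q_n}{n}+O_P(n^{-1})$, and subtracting $2n-2$ leaves $V_n=\dfrac{2(n-1)}{n}\cdot\dfrac{Q_n}{n}+O_P(n^{-1})$. Slutsky's theorem and the continuous mapping theorem then yield $V_n\xrightarrow{d}2\cdot\tfrac12\chi_2^2=\chi_2^2$, the desired conclusion.

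For consistency with the rest of the paper the same limit can be reached through Lemma~\ref{lem:expansion} applied with $X_i^{(1)}\equiv1$ (so $m_1=1$, $\sigma_1=0$) and $X_i^{(2)}=\alpha_i$ (so $m_2=\tfrac12$): then the first-order term $g_1$ vanishes identically, $g_2(X_i,X_j)=2-4\sin^2(\alpha_i-\alpha_j)$, the expansion (\ref{eqn:expansion}) collapses to $1/S_n=2+M_n+R_n$ with $M_n=-4(S_n-\tfrac12)$, and therefore $V_n=nM_n+nR_n+2=-4n(S_n-\tfrac12)+nR_n+2$; by (\ref{eqn:rn2}) the term $nR_n\to0$ in probability, and $n(S_n-\tfrac12)$ is a \emph{degenerate} second-order $U$-statistic whose kernel has the finite-rank Mercer expansion $\sin^2(\alpha-\beta)-\tfrac12=-\tfrac14\phi_1(\alpha)\phi_1(\beta)-\tfrac14\phi_2(\alpha)\phi_2(\beta)$ with orthonormal $\phi_1(\alpha)=\sqrt2\cos2\alpha$, $\phi_2(\alpha)=\sqrt2\sin2\alpha$. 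It is exactly this rank-two structure (two equal eigenvalues $-\tfrac14$) that makes the limit an affinely shifted $\chi_2^2$ rather than an infinite weighted sum of squared normals.

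I expect the real obstacle to be precisely the degeneracy: unlike the RSS/bearing case of Theorems~\ref{thm:rssconvergence}--\ref{thm:rssrate}, the variance of the first H\'ajek projection here is zero, so the ordinary $U$-statistic CLT gives nothing and one must pass to the second-order term, obtaining a quadratic-form (chi-square type) limit rather than a normal one; the trigonometric identity of the first paragraph is the device that makes this second-order limit completely elementary, reducing everything to a two-dimensional CLT plus Slutsky. The remaining items are routine: controlling the $O_P(n^{-1})$ remainders and noting that the singular event has probability zero.
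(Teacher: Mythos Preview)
Your proposal is correct, and your second route---applying Lemma~\ref{lem:expansion} with $X_i^{(1)}\equiv1$, recognizing that $g_1\equiv0$ forces degeneracy, and then invoking the second-order limit theorem for degenerate $U$-statistics via the rank-two spectral decomposition of the kernel---is precisely the paper's proof (the paper writes the kernel directly as $\varphi(\alpha_1,\alpha_2)=\sum_{k=1}^2\varphi_k(\alpha_1)\varphi_k(\alpha_2)$ with $\varphi_1=\sqrt2\cos2\alpha$, $\varphi_2=\sqrt2\sin2\alpha$ and cites Theorem~1, p.~79 of Lee's monograph).

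Your first route, however, is genuinely different and more elementary. By collapsing the denominator to the closed form $(n^2-C_n^2-D_n^2)/4$ you bypass both Lemma~\ref{lem:expansion} and the degenerate $U$-statistic machinery entirely: the limit follows from the bivariate Lindeberg--L\'evy CLT and Slutsky alone. In fact you can sharpen it further, since $V_n=\dfrac{2(n-1)Q_n}{n^2-Q_n}=\dfrac{2(n-1)}{n}\cdot\dfrac{Q_n/n}{1-Q_n/n^2}$ is an \emph{exact} identity, so no Taylor remainder is needed---just $Q_n/n\xrightarrow{d}\tfrac12\chi_2^2$, $Q_n/n^2\xrightarrow{P}0$, and $2(n-1)/n\to2$. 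What the paper's approach buys is uniformity of method across the RSS, bearing, and TOA cases (one expansion lemma handles all three, with the TOA case emerging as the degenerate specialization); what your elementary approach buys is a self-contained argument requiring no $U$-statistic theory at all, at the cost of being specific to the TOA structure where $T_n\equiv1$.
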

\begin{proof}
See Appendix \ref{sec:proofoftoaconvergence}
\end{proof}

\begin{rem}
According to Theorem \ref{thm:toaconvergence}, if $n$ is sufficiently large, the pdf of $Tr(C_{TOA})$ can be
approximated by
\begin{equation}\label{eqn:toa_approximate_distribution}
\frac{n(n-1)}{2\sigma_T^2c^2}f_{\chi}(\frac{n(n-1)}{2\sigma_T^2c^2}x-2n+2),
\end{equation}
where $f_{\chi}(\cdot)$ is the pdf of the chi-square random variable of degree $2$. Therefore, we can approximate the
moments of $Tr(C_{TOA})$ by the corresponding moments of the random variable defined by
(\ref{eqn:toa_approximate_distribution}), namely,
\begin{eqnarray}
E_{\mathbf\omega}(Tr(C_{TOA})) &\approx& \frac{4\sigma_T^2c^2}{n-1} \label{eqn:etrtoa2},\\
Std_{\mathbf\omega}(Tr(C_{TOA})) &\approx&
\frac{4\sigma_T^2c^2}{n(n-1)}\label{eqn:stdtrtoa2}.
\end{eqnarray}
\end{rem}

\subsection{Simulations}

In the simulations, we let $\sigma_Tc=1$ m and plot the distribution functions and pdfs of $Tr(C_{TOA})$ from both
simulations (with the legend ``Simulation'') and the random variable defined by
(\ref{eqn:toa_approximate_distribution}) (with the legend ``Formula'') for $n = 5, 10, 15, 20$ in
Fig.~\ref{fig:toadistribution}. It can be seen that the discrepancies between the pairs of distribution functions (or
pdfs) are not as obvious as in the RSS case, and also vanish with $n$ increasing, which is consistent with Theorem
\ref{thm:toaconvergence}. Note that every curve of the pdf from the simulations contains a sharp curvature, but the
corresponding curve from the given random variable is perfectly smooth; this does not invalidate Theorem
\ref{thm:toaconvergence}, due to the fact that as $n$ increases, the curve associated with $Tr(C_{TOA})$ tends to
become an impulse such that the minimum of $Tr(C_{TOA})$ will overlap with the value of $Tr(C_{TOA})$ attaining the
maximum of the pdf.

\begin{rem}
From Fig.~\ref{fig:toa_cdf}, we can obtain a clear understanding about $Tr(C_{TOA})$. When $n=5$, $Tr(C_{TOA})$ has a
lower limit around $0.8$ m$^2$ and can be far greater than $(\sigma_Tc)^2=1$ m$^2$ ($\sigma_Tc$ denotes the accuracy of
distance measurements from TOA); when $n\geq10$, the lower limit of $Tr(C_{TOA})$ reduces to be less than $0.4$ m$^2$
and $Tr(C_{TOA})$ could hardly take values above $1$ m$^2$.
\end{rem}

\begin{figure}
\centering \subfigure[]{
\includegraphics[scale=0.8]{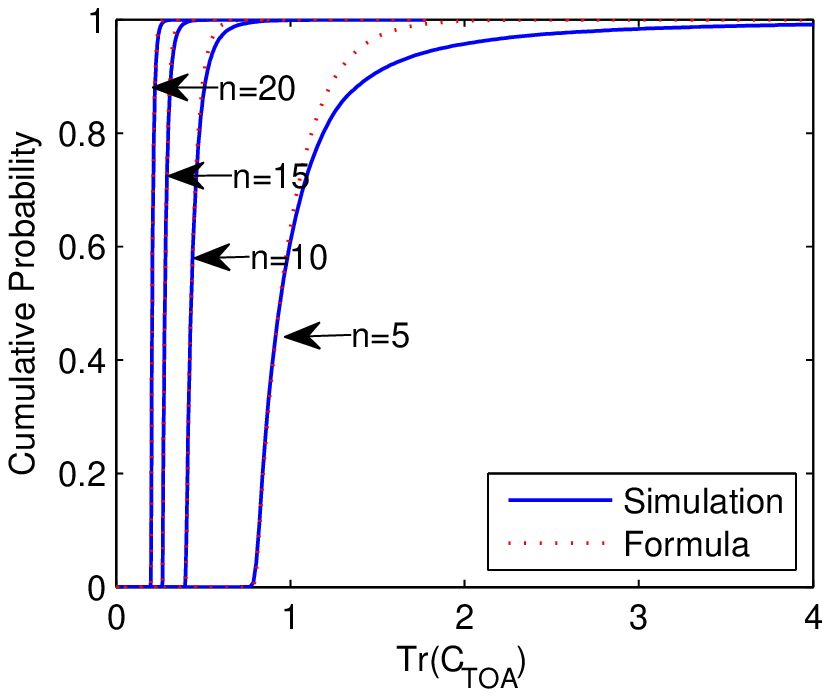}
\label{fig:toa_cdf}} \subfigure[]{
\includegraphics[scale=0.8]{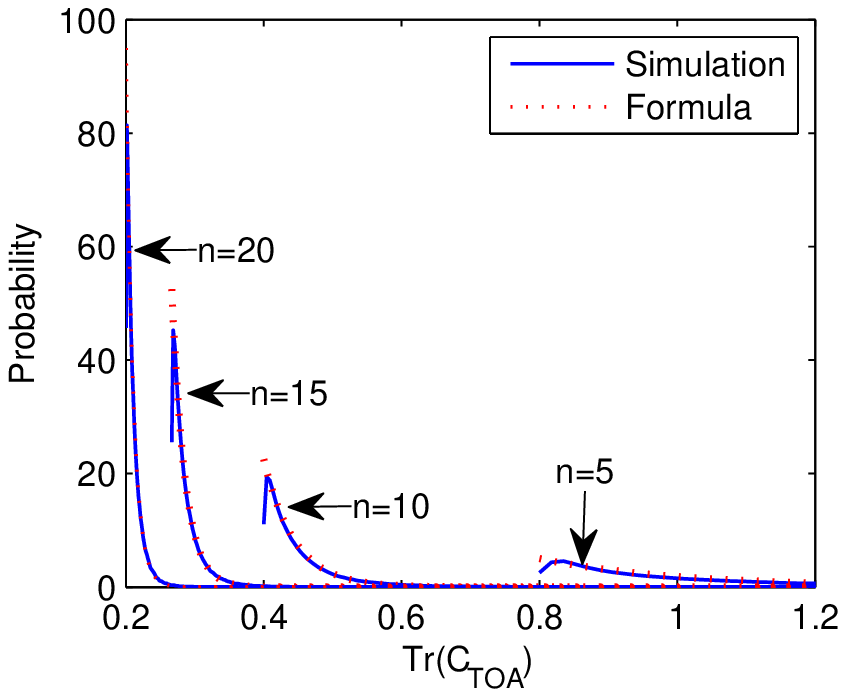}
\label{fig:toa_pdf}}  \caption[]{The distribution functions and pdfs of $Tr(C_{TOA})$ with $R=10$m and $\sigma_Tc=1$m.}
\label{fig:toadistribution}
\end{figure}

Next, we plot  $E_{\mathbf\omega}(Tr(C_{TOA}))$ and $Std_{\mathbf\omega}(Tr(C_{TOA}))$ from simulations and approximate
formulas (\ref{eqn:etrtoa2}) and (\ref{eqn:stdtrtoa2}) in Fig.~\ref{fig:toa_mean}, as well as the associated relative
errors in Fig.~\ref{fig:toa_re}. Although both formulas produce overestimates, the relative error of
(\ref{eqn:etrtoa2}) is much smaller than that of (\ref{eqn:stdtrtoa2}). Assuming that a relative error below $10\%$ is
acceptable, (\ref{eqn:etrtoa2}) is applicable if $n\geq6$, whereas (\ref{eqn:stdtrtoa2}) is not applicable even if
$n=20$.

\begin{figure}
\centering \subfigure[]{
\includegraphics[scale=0.8]{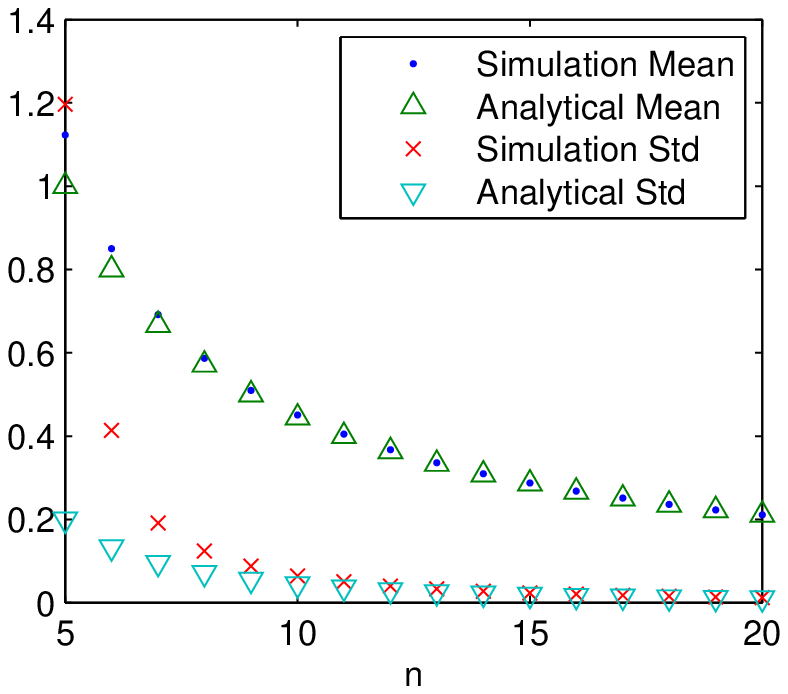}
\label{fig:toa_mean}}  \subfigure[]{
\includegraphics[scale=0.8]{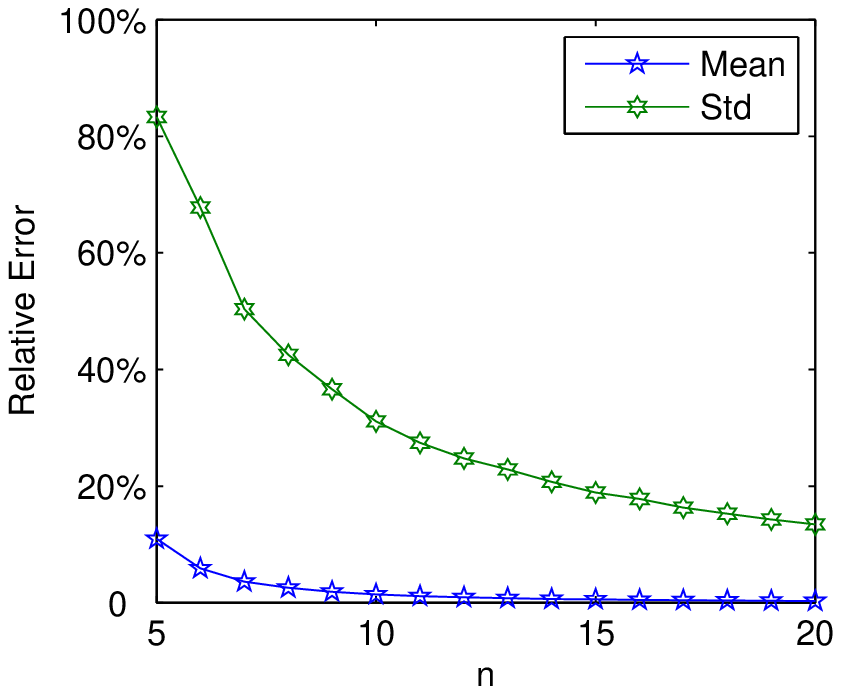}
\label{fig:toa_re}}\caption[]{The means and  the standard deviations of $Tr(C_{TOA})$ from the simulations and
formulas, and the corresponding relative errors, with $R=10$m and $\sigma_Tc=1$m.} \label{fig:toa}
\end{figure}

\begin{rem}
Since (\ref{eqn:etrtoa2}) is extraordinarily accurate and shows that $E_{\mathbf\omega}(Tr(C_{TOA}))$ is inversely
proportional to $n$, we can determine a critical value $n^*$ differing from $\sigma_Tc$, such that having more anchors
than $n^*$ contributes little to the quality of sensor localization, as in the RSS case. Further, if $n$ is
sufficiently large such that (\ref{eqn:stdtrtoa2}) is applicable, the coefficient of variation for $Tr(C_{TOA})$
approximately equals $\frac{1}{n}$, implying that $Tr(C_{TOA})$ reduces to $0$ with a faster rate than $Tr(C_{RSS})$ as
$n$ increases. \begin{comment}As such, it can be concluded that TOA-based localization is less sensitive to
sensor-anchor geometries than RSS-based localization.\end{comment}
\end{rem}

\section{Conclusion and Future Work}\label{sec:conclusion}
In this paper, we investigated the performance limit of single-hop sensor localization using RSS, TOA or bearing
measurements by statistical sensor-anchor geometry modeling. That is, the anchor locations are assumed to be random,
and the scalar metric for the performance limit of sensor localization, i.e. the trace of the associated CRLB matrix,
consequently becomes random. We came up with formulas expressing the asymptotic behavior of the scalar metric in terms
of distribution, mean and standard deviation. Specifically, as the number of the anchors goes to infinity, the scalar
metric in the RSS/bearing case is asymptotically normal and its rate of convergence to normality was also derived; in
the TOA case, the scalar metric converges to a random variable which is an affine transformation of a chi-square random
variable of degree $2$; we presented approximate formulas for means and standard deviations of the scalar metric in
both the RSS/bearing case and the TOA case. Although these formulas are asymptotic in the number of the anchors, they
turn out to be remarkably accurate in predicting the performance limit of sensor localization in many cases, even if
the number of the anchors is fairly small. In addition, we analyzed the formulas to demonstrate some general properties
of sensor localization and carried out extensive simulations to verify the conclusions.

Considering the similarities between the models for bearing measurements and angle of arrival (AOA) measurements, we
can easily expand the conclusions in the RSS/bearing case to AOA-based localization. Furthermore, distance measurements
in range-only localization are often modeled to be mutually independent and Gaussian \cite{BFADP10}, which is the same
as occurs with TOA measurements, and thus, it is straightforward to expand the conclusions in the TOA case to
range-only localization. In future work, we may expand our study into three-dimensional space and multi-hop sensor
localization.

% if have a single appendix:
%\appendix[Proof of the Zonklar Equations]
% or
%\appendix  % for no appendix heading
% do not use \section anymore after \appendix, only \section*
% is possibly needed

% use appendices with more than one appendix
% then use \section to start each appendix
% you must declare a \section before using any
% \subsection or using \label (\appendices by itself
% starts a section numbered zero.)
%

\appendices

\section{Derivation of $Tr(C_{RSS})$}\label{sec:trace}
According to the formulation of $F_{RSS}$, we have
\begin{equation}
C_{RSS}=\frac{1}{b}\left(
 \begin{array}{cc}
  \sum_{i=1}^n  \frac{\cos^2\alpha_i}{d_i^2} &  \sum_{i=1}^n  \frac{\cos\alpha_i\sin\alpha_i}{d_i^2}\\
  \sum_{i=1}^n  \frac{\cos\alpha_i\sin\alpha_i}{d_i^2}   &  \sum_{i=1}^n  \frac{\sin^2\alpha_i}{d_i^2} \\
 \end{array}
\right)^{-1}.
\end{equation}

Then,
\begin{eqnarray}
 Tr(C_{RSS})  &=&\frac{\frac{1}{b}\sum_{i=1}^n  \frac{1}{d_i^2} }{\left(\sum_{i=1}^n  \frac{\cos^2\alpha_i}{d_i^2} \right) \left(\sum_{i=1}^n  \frac{\sin^2\alpha_i}{d_i^2}\right) - \left(\sum_{i=1}^n \frac{\cos\alpha_i\sin\alpha_i}{d_i^2}\right)^2 }\nonumber\\
% &=&\frac{\frac{1}{b}\sum_{i=1}^n  \frac{1}{d_i^2} }{\sum_{1\leq i,j\leq n} \left(\frac{\cos^2\alpha_i\sin^2\alpha_j}{d_i^2d_j^2} - \frac{\cos\alpha_i\sin\alpha_i\cos\alpha_j\sin\alpha_j}{d_i^2d_j^2}\right)  }\nonumber\\
 &=&\frac{\frac{1}{b}\sum_{i=1}^n  \frac{1}{d_i^2} }{\sum_{1\leq i,j\leq n} \frac{\cos\alpha_i\sin\alpha_j\sin(\alpha_i-\alpha_j)}{d_i^2d_j^2}  }\nonumber\\
 &=&\frac{\frac{1}{b}\sum_{i=1}^n  \frac{1}{d_i^2} }{\sum_{1\leq i<j\leq n} \frac{(\cos\alpha_i\sin\alpha_j\sin(\alpha_i-\alpha_j)+\cos\alpha_j\sin\alpha_i\sin(\alpha_j-\alpha_i))}{d_i^2d_j^2} }\nonumber\\
 &=&\frac{\frac{1}{b}\sum_{i=1}^n  \frac{1}{d_i^2} }{\sum_{1\leq i<j\leq n} \frac{\sin^2(\alpha_i-\alpha_j)}{d_i^2d_j^2}  }.\nonumber
\end{eqnarray}
% you can choose not to have a title for an appendix
% if you want by leaving the argument blank

\begin{comment}
\section{PDFs of $d_i$ and $\alpha_i$}\label{sec:distributions}
In the RSS case, define $F_{\alpha,d}(\alpha_i,d_i)$ to be the distribution function of the polar coordinates of the
$i$-th anchor, where $d_i\in[R_0, R]$ and $\alpha_i\in[-\pi,\pi)$, and $f_{\alpha,d}(\alpha_i,d_i)$ to be the joint
pdf. Then, we have
\begin{eqnarray}
F_{\alpha_i,d_i}(\alpha_i,d_i)&=&\frac{\alpha_i(d_i^2-R_0^2)}{2\pi(R^2-R_0^2)},\\
   f_{\alpha,d}(\alpha_i,d_i)&=&\frac{2d_i}{\pi(R^2-R_0^2)}.
\end{eqnarray}

Accordingly, define $f_{\alpha_i}(\alpha_i)$ and $f_{d}(d_i)$ to be their marginal pdfs and we have
\begin{eqnarray}
f_{\alpha}(\alpha_i)&=&\frac{1}{2\pi},\\
   f_{d}(d_i)&=&\frac{2d_i}{R^2-R_0^2}.
\end{eqnarray}

Because of $f_{\alpha,d}(\alpha_i,d_i)=f_{\alpha}(\alpha_i)\times f_{d}(d_i)$, $\alpha_i$ and $d_i$ are mutually
independent.
\end{comment}

\section{Proof of Theorem \ref{thm:moments}}\label{sec:moments}
\begin{proof}
Let the denominator of $Tr(C_{RSS})$ be
\begin{equation}
Y_n= \sum_{1\leq i<j\leq n} \left(\frac{\sin^2(\alpha_i-\alpha_j)}{d_i^2d_j^2}\right).
\end{equation}
Since the numerator of $Tr(C_{RSS})$ is obviously bounded, once the first and second moments of $Y_n^{-1}$ are finite,
the corresponding moments of $Tr(C_{RSS})$ are finite too. Supposing the pdf of a random variable $X$, denoted $f(x)$,
is continuous on $[0,+\infty)$, if $f(0)=0$ and $f'(0)<\infty$, then $E(X^{-1})<\infty$ from \cite{PC85}. We let
$f_n(x)$ be the pdf of $Y_n$. To prove $E(Y_n^{-1})<\infty$, we first show $f_n(0)=0$ given $n\geq4$ as follows.

\begin{eqnarray*}
% \nonumber to remove numbering (before each equation)
&&  f_n(0)= \lim_{\varepsilon\to0}\frac{Pr\{Y_n\leq\varepsilon\}-Pr\{Y_n\leq0\}}{\varepsilon} \\
%&=& \lim_{\varepsilon\to0}\frac{Pr\{Y_n\leq\varepsilon\}}{\varepsilon} \\
%&=& \lim_{\varepsilon\to0} \frac{Pr\{ \frac{2}{n(n-1)}\sum_{1\leq i<j\leq n} \left(\frac{\sin^2(\alpha_i-\alpha_j)}{d_i^2d_j^2}\right)\leq\varepsilon \}}{\varepsilon} \\
%&=& \lim_{\varepsilon\to0} \frac{1}{\varepsilon}\int_{Y_n\leq\varepsilon} f_{\alpha_1}\cdots f_{\alpha_n} f_{d_1} \cdots f_{d_n} d \alpha_1\cdots d{\alpha_n} d d_1 \cdots d d_n \\
&=& \lim_{\varepsilon\to0} \frac{1}{\varepsilon}{\int\cdots\int}_{\frac{2}{n(n-1)}\sum_{1\leq i<j\leq n}\left(\frac{\sin^2(\alpha_i-\alpha_j)}{d_i^2d_j^2}\right)\leq\varepsilon} f_{\alpha}(\alpha_1)\cdots f_{\alpha}(\alpha_n) f_d(d_1) \cdots f_d(d_n) d \alpha_1\cdots d{\alpha_n} d d_1 \cdots d d_n \\
%&\leq& \frac{1}{(2\pi(R^2-R_0^2))^n}\lim_{\varepsilon\to0}\frac{1}{\varepsilon}\int_{\frac{2}{n(n-1)}\sum_{1\leq i<j\leq4}\left(\frac{\sin^2(\alpha_i-\alpha_j)}{d_i^2d_j^2}\right)\leq\varepsilon} d \alpha_1\cdotsd{\alpha_n}d d_1^2 \cdots d d_n^2\\
&\leq& \frac{1}{(2\pi )^n}\lim_{\varepsilon\to0} \frac{1 }{\varepsilon}\int\cdots\int_{\frac{2}{n(n-1)}\sum_{1\leq
i<j\leq 4}\left(\frac{\sin^2(\alpha_i-\alpha_j)}{d_i^2d_j^2}\right)\leq\varepsilon} d \alpha_1\cdots
d{\alpha_n}\\
%&\leq& \frac{1}{(2\pi )^n}\lim_{\varepsilon\to0} \frac{1}{\varepsilon}\int_{\sum_{1\leq i<j\leq 4}\left(\sin^2(\alpha_i-\alpha_j)\right)\leq\frac{n(n-1)R^4\varepsilon}{2}} d \alpha_1\cdots d{\alpha_n} \\
&\leq& \frac{1}{(2\pi )^4}\lim_{\varepsilon\to0} \frac{1}{\varepsilon}\int\cdots\int_{\sum_{1\leq i<j\leq
4}\left(\sin^2(\alpha_i-\alpha_j)\right)\leq\frac{n(n-1)R^4\varepsilon}{2}} d \alpha_1 d\alpha_2d\alpha_3d\alpha_4 \\
%&\leq& \frac{1}{(2\pi )^4}\lim_{\varepsilon\to0} \frac{1}{\varepsilon}\int_{\left\{ \sin^2(\alpha_1-\alpha_2),\sin^2(\alpha_2-\alpha_3) ,\sin^2(\alpha_3-\alpha_4) \right\}\leq\frac{n(n-1)R^4\varepsilon}{2}} d \alpha_1 d\alpha_2d\alpha_3d\alpha_4 \\
%&\leq& \frac{1}{(2\pi )^4}\lim_{\varepsilon\to0} \frac{1}{\varepsilon}\int_{\left\{ |\sin(\alpha_1-\alpha_2)|,|\sin(\alpha_2-\alpha_3)| ,|\sin(\alpha_3-\alpha_4)| \right\}\leq\sqrt{\frac{n(n-1)R^4\varepsilon}{2}}} d \alpha_1 d\alpha_2d\alpha_3d\alpha_4 \\
%&\leq& \frac{1}{(2\pi )^4}\lim_{\varepsilon\to0} \frac{1}{\varepsilon}\int_{\left\{ |\alpha_1-\alpha_2+k\pi|,|\alpha_2-\alpha_3+k\pi| ,|\alpha_3-\alpha_4+k\pi|, k=-2,-1,0,1,2\right\}\leq\sqrt{\frac{n(n-1)R^4\varepsilon}{2}}} d \alpha_1 d\alpha_2d\alpha_3d\alpha_4 \\
&\leq& \frac{1}{(2\pi )^4}\lim_{\varepsilon\to0}
\frac{1}{\varepsilon}\int_{-\pi<\alpha_1\leq\pi}\int_{|\alpha_2-\alpha_1+k\pi|\leq\sqrt{\frac{n(n-1)R^4\varepsilon}{2}}}\int_{|\alpha_3-\alpha_2+k\pi|\leq\sqrt{\frac{n(n-1)R^4\varepsilon}{2}}}
\int_{|\alpha_4-\alpha_3+k\pi|\leq\sqrt{\frac{n(n-1)R^4\varepsilon}{2}}}  d \alpha_4\cr
&& d\alpha_3d\alpha_2d\alpha_1, \  k=-2,-1,0,1,2 \\
&=& \lim_{\varepsilon\to0} O(\varepsilon^{\frac{1}{2}\times3-1})=0.
\end{eqnarray*}

Next, we need to prove $f_n'(0)<\infty$, namely
\begin{equation}
  f_n'(0) %&=& \lim_{\xi\to0} \frac{f(\xi)}{\xi}\\
   %&=&\lim_{\xi\to0}\frac{f(\xi)-f(0)}{\xi}\\
%   &=&\lim_{\xi\to0}\frac{\lim_{\varepsilon\to0}\frac{Pr\{Y_n\leq\xi+\varepsilon\}-Pr\{Y_n\leq\xi\}}{\varepsilon}}{\xi}\\
%   &=&\lim_{\xi\to0}\lim_{\varepsilon\to0}\frac{Pr\{\xi<Y_n\leq\xi+\varepsilon\}}{\varepsilon\xi}\\
  =\lim_{\xi\to0}\lim_{\varepsilon\to0}\frac{1}{\varepsilon\xi}\int\cdots\int_{\xi<Y_n\leq\xi+\varepsilon}f_{\alpha}(\alpha_1)\cdots f_{\alpha}(\alpha_n) f_d(d_1) \cdots f_d(d_n)d\alpha_1\cdots d\alpha_n dd_1
  \cdots dd_n<\infty.\label{eqn:accumulative}
\end{equation}

Because $f_n(\xi)$ is finite for any $\xi$, the limit calculus results in
\begin{equation}
  f_n'(0)=\lim_{\xi\to0,\varepsilon\to0}\frac{1}{\varepsilon\xi}\int\cdots\int_{\xi<Y_n\leq\xi+\varepsilon}f_{\alpha}(\alpha_1)\cdots f_{\alpha}(\alpha_n) f_d(d_1) \cdots f_d(d_n)d\alpha_1\cdots d\alpha_n dd_1
  \cdots dd_n.
\end{equation}
On replacing $\xi$ and $\varepsilon$ by $r\cos\theta$ and $r\sin\theta$, one has
\begin{eqnarray*}
  &&f_n'(0)\\
  &=&\lim_{r\to0}\frac{1}{r^2 \cos\theta\sin\theta}\int\cdots\int_{r\cos\theta<Y_n\leq
r(\cos\theta+\sin\theta)}f_{\alpha}(\alpha_1)\cdots
f_{\alpha}(\alpha_n) f_d(d_1) \cdots f_d(d_n)d\alpha_1\cdots d\alpha_n dd_1 \cdots dd_n\\
%  &=&\frac{1}{(2\pi)^n}\lim_{r\to0}\frac{1}{r^2 \cos\alpha\sin\alpha}\int_{r\cos\alpha<Y_n\leq r(\cos\alpha+\sin\alpha)}  d\alpha_1\cdots d\alpha_n \\
&\leq& \frac{1}{(2\pi)^n}\lim_{r\to0}\frac{1}{r^2 \cos\theta\sin\theta}\int\cdots\int_{Y_n\leq
r(\cos\theta+\sin\theta)}dd_1\cdots dd_n.
\end{eqnarray*}
The term on the right hand side (RHS) of the above inequality can be treated similarly to proving $f_n(0)=0$, with the
result that for $n\geq5$,
\begin{equation}
\lim_{r\to0}\frac{1}{\varepsilon\xi}\int\cdots\int_{r\cos\theta<Y_n\leq r(\cos\theta+\sin\theta)}d\alpha_1\cdots
d\alpha_n<\infty
\end{equation}
and thus $E(Y_n^{-1})<\infty$. It is straightforward to extend the result about the first moment to the second moment
by following a similar line of argument, and then the theorem is proved.
\end{proof}

\section{Proof of Lemma \ref{lem:expansion}}\label{sec:proofofexpansion}

Prior to the proof, we review some background about $U$-statistics. Firstly, given a $U$-statistic $U_n$ with kernel
$h(x_1,\cdots,x_r)$, it can be rewritten using Hoeffding's (or $H$-) Decomposition as follows (see \cite{Lee90})
\begin{equation}
   U_{n}=\sum_{j=1}^r\frac{j!(n-j)!}{n!} \sum_{1\leq i_1<\cdots<i_j\leq n}\lambda_j(X_{i_1},\cdots,X_{i_j})
\end{equation}
where
\begin{eqnarray*}
   \lambda_1(x_1)&=&E(h(x_1,X_2,\cdots,X_r)),\\
   \lambda_2(x_1,x_2)&=&E(h(x_1,x_2,X_3,\cdots,X_r))-\lambda_1(x_1)-\lambda_1(x_2),\\
   &\cdots&\\
   \lambda_r(x_1\cdots,x_r)&=&h(x_1,\cdots,x_r)-\sum_{j=1}^{r-1}\sum_{1\leq i_1<\cdots<i_j\leq
   r}\lambda_j(x_{i_1},\cdots,x_{i_j}).
\end{eqnarray*}

We define
\begin{equation}
% \nonumber to remove numbering (before each equation)
 \Lambda_k = \sum_{1\leq i_1<\cdots<i_k\leq n}\lambda_k(X_{i_1},\cdots,X_{i_k})
\end{equation}
and have
\begin{equation}\label{eqn:transform1}
% \nonumber to remove numbering (before each equation)
 \sum_{1\leq i_1<\cdots<i_r\leq n}h(X_{i_1},\cdots,X_{i_r}) = \sum_{k=1}^r\left( \frac{(n-k)!}{(n-r)!(r-k)!}
   \Lambda_k\right).
\end{equation}

Then, the following lemma concerning $\Lambda_k$ can be derived.
\begin{lem}\label{lem:l1}
(Lemma $A1$ in \cite{Maesono98}) For any given $q\geq 2$, if $E|h(X_1,\cdots,X_r)|^q<\infty$, then there exists a
positive constant $a$, which may depend on $h(X_1,\cdots,X_r)$ and the distribution of $X_1,\cdots,X_r$ but is
independent of $n$, such that
\begin{equation}
  E|\Lambda_k|^q\leq an^{\frac{qk}{2}}.
\end{equation}
\end{lem}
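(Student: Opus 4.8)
The plan is to read the statement as a moment bound for the completely degenerate $U$-statistic $\Lambda_k$, and to exploit the one structural fact that Hoeffding's decomposition guarantees: each kernel $\lambda_k$ is \emph{canonical}, i.e. $E\big(\lambda_k(x_1,\dots,x_{k-1},X_k)\big)=0$ for every fixed $x_1,\dots,x_{k-1}$, which is immediate from the telescoping construction of $\lambda_1,\dots,\lambda_r$. I would first reduce to the case where $q=2p$ is an even integer: for a general $q\ge 2$, Lyapunov's inequality gives $E|\Lambda_k|^q\le\big(E|\Lambda_k|^{2\lceil q/2\rceil}\big)^{q/(2\lceil q/2\rceil)}$, so a bound of the asserted form for the even exponent $2\lceil q/2\rceil$ transfers to $q$ after adjusting the constant. (In this paper the kernel $h$ is assembled from bounded random variables, hence $h$ and every $\lambda_k$ are bounded and all moments are automatically finite, which makes the reduction harmless.)

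With $q=2p$ I would expand
\begin{equation*}
E\big(\Lambda_k^{2p}\big)=\sum_{S_1,\dots,S_{2p}}E\Big(\prod_{l=1}^{2p}\lambda_k(X_{S_l})\Big),
\end{equation*}
the sum running over ordered $2p$-tuples of $k$-element subsets $S_l\subseteq\{1,\dots,n\}$, and then argue that a summand vanishes unless every index of $\bigcup_l S_l$ lies in at least two of the sets $S_l$: if some $j$ belongs to exactly one set $S_{l_0}$, conditioning on $\{X_i:i\ne j\}$ makes every factor with $l\ne l_0$ measurable while $E\big(\lambda_k(X_{S_{l_0}})\mid\{X_i:i\ne j\}\big)=0$ by canonicity in the $j$-th argument. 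Consequently the surviving tuples satisfy $|\bigcup_l S_l|\le\lfloor 2pk/2\rfloor=pk$, so their number is at most $c_{p,k}\,n^{pk}$ for a constant $c_{p,k}$ depending only on $p$ and $k$. Bounding each surviving summand by H\"older's inequality, $\big|E(\prod_l\lambda_k(X_{S_l}))\big|\le\prod_l\big(E|\lambda_k(X_{S_l})|^{2p}\big)^{1/(2p)}=E|\lambda_k(X_1,\dots,X_k)|^{2p}<\infty$, I would conclude $E(\Lambda_k^{2p})=O(n^{pk})=O(n^{qk/2})$ with a constant depending on $q$, $k$, $r$ and the law of $X_1$ but not on $n$.

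The hard part is expected to be bookkeeping rather than any new idea. First, the combinatorial count in the middle step must be organized carefully --- showing that ``each index covered at least twice'' forces $|\bigcup_l S_l|\le pk$, and that the number of admissible patterns is $O(n^{pk})$ with a constant independent of $n$ (cleanest by first fixing the support, of some size $j\le pk$, then distributing the $2p$ blocks of $k$ labels over it). Second, if one insists on only the stated hypothesis $E|h|^q<\infty$ rather than boundedness, passing to the even exponent $2\lceil q/2\rceil$ is not for free, and the argument should instead be routed through the martingale representation $\Lambda_k=\sum_{m=k}^n D_m$, where $D_m$ gathers the terms whose largest index is $m$; canonicity makes $\{D_m\}$ a martingale difference sequence, each $D_m$ is itself a degenerate $U$-statistic of order $k-1$ in $X_1,\dots,X_{m-1}$, and an induction on $k$ combining Burkholder's inequality with Rosenthal's inequality for the outer sum (base case $k=1$ being Marcinkiewicz--Zygmund) yields the bound under the moment hypothesis alone. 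For the way this lemma is used in the paper, the bounded-kernel shortcut is enough.
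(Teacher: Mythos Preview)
The paper does not supply its own proof of this lemma: it is quoted as Lemma~A1 of \cite{Maesono98} and invoked as a black box, so there is no in-paper argument to compare against.

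Your proposal is correct. The even-moment expansion together with the ``every index must be covered at least twice'' observation is the standard combinatorial route to $E(\Lambda_k^{2p})\le c_{p,k}\,E|\lambda_k|^{2p}\,n^{pk}$, and your bookkeeping (support size at most $pk$, hence $O(n^{pk})$ admissible tuples, then H\"older on each survivor) is exactly right. You are also correct that the Lyapunov reduction to the exponent $2\lceil q/2\rceil$ silently demands a higher moment than the stated hypothesis $E|h|^q<\infty$ guarantees; your alternative via the martingale-difference representation $\Lambda_k=\sum_{m=k}^n D_m$ and an induction on $k$ combining Burkholder's and Rosenthal's inequalities (with Marcinkiewicz--Zygmund as the base case) is the standard way to obtain the bound under the minimal moment assumption. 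Since in this paper every kernel is built from bounded random variables, the even-integer shortcut already suffices for all downstream uses.
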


In addition, we can obtain the critical lemma as follows.
\begin{lem}\label{lem:l2}
Let $r$ be a fixed positive integer. For any given $q\geq2$, if $E|h(X_1,\cdots,X_r)|^q<\infty$, then there exists a
constant $a$ independent of $n$, such that
\begin{equation}
%    n^{-r-1}\sum_{1\leq i_1<\cdots <i_r\leq    n}h(X_{i_1},\cdots,X_{i_r})&=&o^*_p((n\ln n)^{-1})\label{eqn:lem2part1eq1}\\
%    Pr\left\{ n\left|n^{-r-1}\sum_{1\leq i_1<\cdots <i_r\leq    n}h(X_{i_1},\cdots,X_{i_r})\right| \geq \varepsilon \right\} &=& O(n^{-\frac{q}{2}})\label{eqn:lem2part1eq2}\\
%    \sup_n\left\{E\left(n\left|n^{-r-1}\sum_{1\leq i_1<\cdots <i_r\leq    n}h(X_{i_1},\cdots,X_{i_r})\right|\right)\right\}&<&\infty\label{eqn:lem2part1eq3}
E\left\{ \left|n^{-r}\sum_{1\leq i_1<\cdots <i_r\leq n}h(X_{i_1},\cdots,X_{i_r})\right|^q\right\} \leq a
n^{-\frac{q}{2}}.
\end{equation}

\end{lem}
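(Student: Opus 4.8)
The plan is to derive Lemma~\ref{lem:l2} from Lemma~\ref{lem:l1} via Hoeffding's decomposition together with a term-by-term application of Minkowski's inequality. I would begin from identity~(\ref{eqn:transform1}), which rewrites $\sum_{1\leq i_1<\cdots<i_r\leq n} h(X_{i_1},\cdots,X_{i_r})$ as the linear combination $\sum_{k=1}^{r}\frac{(n-k)!}{(n-r)!(r-k)!}\,\Lambda_k$ of the $H$-decomposition blocks $\Lambda_k$. Dividing through by $n^{r}$ and invoking the triangle inequality for the $L^{q}$-norm gives
\begin{equation*}
\left(E\left|n^{-r}\sum_{1\leq i_1<\cdots<i_r\leq n} h(X_{i_1},\cdots,X_{i_r})\right|^{q}\right)^{1/q}\leq\sum_{k=1}^{r} n^{-r}\,\frac{(n-k)!}{(n-r)!(r-k)!}\left(E|\Lambda_k|^{q}\right)^{1/q}.
\end{equation*}

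Next I would estimate the two ingredients on the right-hand side. The combinatorial coefficient satisfies $\frac{(n-k)!}{(n-r)!}=(n-k)(n-k-1)\cdots(n-r+1)$, a product of $r-k$ factors each at most $n$, so that $n^{-r}\frac{(n-k)!}{(n-r)!(r-k)!}\leq n^{-r}n^{r-k}=n^{-k}$. For the moment factor, applying Lemma~\ref{lem:l1} at each level $k\leq r$ (the hypothesis $E|h(X_1,\cdots,X_r)|^{q}<\infty$ being inherited by the kernels $\lambda_k$) supplies constants $a_1,\dots,a_r$, independent of $n$, with $E|\Lambda_k|^{q}\leq a_k n^{qk/2}$. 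Hence the $k$-th summand above is at most $a_k^{1/q}n^{-k}n^{k/2}=a_k^{1/q}n^{-k/2}$, and since $n^{-k/2}\leq n^{-1/2}$ for $n\geq1$ the whole right-hand side is bounded by $\left(\sum_{k=1}^{r}a_k^{1/q}\right)n^{-1/2}$, where $\sum_{k=1}^{r}a_k^{1/q}$ does not depend on $n$ because $r$ is fixed. Raising both sides to the power $q$ yields the claim with $a=\left(\sum_{k=1}^{r}a_k^{1/q}\right)^{q}$.

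I do not expect a genuine obstacle; the argument is essentially bookkeeping. The one point requiring care --- and the conceptual heart of the estimate --- is the matching of powers of $n$: the normalization $n^{-r}$ exactly offsets the $O(n^{r-k})$ size of the $k$-th combinatorial coefficient, leaving $n^{-k}$, which against the $O(n^{k/2})$ growth of $\|\Lambda_k\|_{q}$ from Lemma~\ref{lem:l1} produces $O(n^{-k/2})$; the slowest-decaying term is $k=1$, giving the overall rate $O(n^{-1/2})$ and hence $O(n^{-q/2})$ for the $q$-th moment. It is also worth verifying explicitly that the final constant is independent of $n$, which holds because $r$, the number of summands, and each $a_k$ are all fixed.
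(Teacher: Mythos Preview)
Your proposal is correct and follows essentially the same route as the paper: both start from the Hoeffding decomposition (\ref{eqn:transform1}), bound the combinatorial coefficient $n^{-r}\frac{(n-k)!}{(n-r)!(r-k)!}$ by $O(n^{-k})$, invoke Lemma~\ref{lem:l1} for $E|\Lambda_k|^{q}$, and observe that the $k=1$ term dictates the rate $n^{-q/2}$. The only cosmetic difference is that the paper splits the $q$-th moment of the sum via the $c_r$-inequality rather than Minkowski's inequality, leading to a different (but equally $n$-independent) final constant.
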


\begin{proof}
From (\ref{eqn:transform1}) and $c_r$-inequality, we get
\begin{equation}
% \nonumber to remove numbering (before each equation)
E \left(\left|n^{-r }\sum_{1\leq i_1<\cdots <i_r\leq
   n}h(X_{i_1},\cdots,X_{i_r})\right|^q\right)
% &=& E \left(\left|n^{-r }\sum_{k=1}^r \left( \frac{(n-k)!}{(n-r)!(r-k)!} \Lambda_k\right)\right|^q \right)\\
   \leq  r^{q-1}\sum_{k=1}^r E \left(\left|n^{-r }\left( \frac{(n-k)!}{(n-r)!(r-k)!}
   \Lambda_k\right)\right|^q\right).
\end{equation}
Noting that $n^{-r}\left( \frac{(n-k)!}{(n-r)!(r-k)!}\right) =O(n^{-k})$, by Lemma \ref{lem:l1}, we know that there
exist constants $a_1,\cdots,a_r$, which are all independent of $n$, such that
\begin{equation}
% \nonumber to remove numbering (before each equation)
\sum_{k=1}^r E\left( \left|n^{-r}\left( \frac{(n-k)!}{(n-r)!(r-k)!}
   \Lambda_k\right)  \right|^q\right)
   \leq \sum_{k=1}^r(a_k n^{-kq+\frac{kq}{2}} ).
\end{equation}
Thus, by letting $a=r\max\{a_1,\cdots,a_r\}$, we have
\begin{equation}
% \nonumber to remove numbering (before each equation)
  E \left( \left|n^{-r}\sum_{1\leq i_1<\cdots <i_r\leq
   n}h(X_{i_1},\cdots,X_{i_r})\right|^q\right) \leq an^{-\frac{q}{2}}.
\end{equation}
\end{proof}

%\begin{eqnarray}
%    \sup_n\left\{E(|NY_n|^3\right\} &<& \infty,\\
%  Pr\{ |n(\ln n)Y_n| \geq \varepsilon\} &=& o(1), \ n\to\infty,\\
%    Pr\{|NY_n| \geq \varepsilon\} &=& O(n^{-1}),\ n\to\infty.
%\end{eqnarray}

The proof below for Lemma \ref{lem:expansion} is in line with that of Theorem $1$ in \cite{Maesono98}, but Lemma
\ref{lem:expansion} enhances the statement about the remainder term $R_n$ by % (\ref{eqn:rn1}),
(\ref{eqn:rn2}) and (\ref{eqn:rn3}) which are key to prove the subsequent theorems.
\begin{proof}
For ease of presentation, we say a random variable satisfies the Condition $\mathcal{A}$ if and only if it satisfies
the same two conditions (i.e. (\ref{eqn:rn2}) and (\ref{eqn:rn3})) as $R_n$ does in Lemma \ref{lem:expansion}. The
skeleton of the proof is: \emph{Step 1}, applying Taylor expansions on $S_n^{-1}$; {\em Step 2\&3}, identifying the
resulting terms in the expansion of $S_n^{-1}$ which satisfy the Condition $\mathcal{A}$; {\em Step 4}, rewriting
$S_n^{-1}$ by combining the terms satisfying the Condition $\mathcal{A}$ into one term; {\em Step 5}, multiplying
$S_n^{-1}$ and $T_n$, identifying the resulting terms which satisfy the Condition $\mathcal{A}$ and combining them into
the remainder term $R_n$.

\subsubsection*{Step 1}Let $\varepsilon$ and $q$ be any given positive real numbers and $q>2$. The $H$-decomposition of $S_n$ is
\begin{equation}\label{eqn:sn}
   S_n = m_s+\frac{2}{n}\sum_{i=1}^n\zeta_1(X_i)+\frac{2}{n(n-1)}\sum_{1\leq i<j\leq n}\zeta_2(X_i,X_j)
\end{equation}
where $m_s = m_1^2m_2$ and
\begin{eqnarray}
 \zeta_1(X_i) &=& (X^{(1)}_i-m_1)m_1m_2, \\
 \zeta_2(X_i,X_j) &=&
 X^{(1)}_iX^{(1)}_j\sin^2(X^{(2)}_i-X^{(1)}_j)-(X^{(1)}_i+X^{(1)}_j)m_1m_2+m_1^2m_2.
\end{eqnarray}
Applying Taylor expansions on $S_n^{-1}$ around $m_s$, we obtain
\begin{equation}\label{eqn:sninverse}
   S_n^{-1}=m_s^{-1}-m_s^{-2}(S_n-m_s)+m_s^{-3}(S_n-m_s)^2-(m_s+\vartheta_n)^{-4}(S_n-m_s)^3
\end{equation}
where $0\leq|\vartheta_n|\leq|S_n-m_s|$. We shall identify the terms on the RHS of (\ref{eqn:sninverse}) satisfying the
Condition $\mathcal{A}$.

\subsubsection*{Step 2} Firstly, consider $|m_s+\vartheta_n|^{-4}|S_n-m_s|^3$. It follows from (\ref{eqn:sn}) and
$c_r$-inequality that for any $p\geq1/3$,
\begin{equation}\label{eqn:snms3}
   \left|S_n-m_s\right|^{3p}\leq
   2^{3p-1} \left|\frac{2}{n}\sum_{i=1}^n\zeta_1(X_i)\right|^{3p}+2^{3p-1} \left|\frac{2}{n(n-1)}\sum_{1\leq i<j\leq
   n}\zeta_2(X_i,X_j)\right|^{3p}.
\end{equation}
For the first term on the RHS of (\ref{eqn:snms3}), from Lemma \ref{lem:l2} and Markov's inequality, we have
\begin{eqnarray}
%\sup_n \left\{ E\left(n^q\left|\frac{2}{n}\sum_{i=1}^n\zeta_1(X_i)\right|^{3q}\right)\right\}  &<&\infty, \label{eqn:supeq1}\\
Pr\left\{n\left|\frac{2}{n}\sum_{i=1}^n\zeta_1(X_i)\right|^{3}\geq\varepsilon\right\}  &=&  O(n^{-\frac{q}{2}}), \\
 Pr\left\{n(\ln n)\left|\frac{2}{n}\sum_{i=1}^n\zeta_1(X_i)\right|^{3}\geq\varepsilon\right\}  &=&  o(1). \label{eqn:lneq1}
\end{eqnarray}

Hence, the first term on the RHS of (\ref{eqn:snms3}) satisfies the Condition $\mathcal{A}$. Similarly, we can prove
that the second term satisfies the Condition $\mathcal{A}$ too. Furthermore, from (\ref{eqn:lneq1}), the second term on
the RHS of (\ref{eqn:snms3}) satisfying the Condition $\mathcal{A}$, and
\begin{eqnarray}
% \nonumber to remove numbering (before each equation)
 &&Pr\left\{n(\ln n) |S_n-m_s|^3\geq \varepsilon \right\} \nonumber\\
 &\leq& Pr\left\{2\left(n(\ln n)\left|\frac{2}{n}\sum_{i=1}^n\zeta_1(X_i)\right|^3+n(\ln n)\left|\frac{2}{n(n-1)}\sum_{1\leq i<j\leq n}\zeta_2(X_i,X_j)\right|^3\right)\geq\varepsilon\right\} \nonumber\\
 &\leq& Pr\left\{2n(\ln n)\left|\frac{2}{n}\sum_{i=1}^n\zeta_1(X_i)\right|^3\geq\frac{\varepsilon}{2}\right\}+Pr\left\{2n(\ln n)\left|\frac{2}{n(n-1)}\sum_{1\leq i<j\leq n}\zeta_2(X_i,X_j)\right|^3\geq\frac{\varepsilon}{2} \right\}, \nonumber
\end{eqnarray}
we have
\begin{equation}\label{eqn:lneq3}
   Pr\left\{n(\ln n) |S_n-m_s|^3\geq \varepsilon \right\}=o(1).
\end{equation}
Similarly, we have
\begin{eqnarray}
% \nonumber to remove numbering (before each equation)
Pr\left\{n |S_n-m_s|^3\geq \varepsilon \right\}=   O(n^{-\frac{q}{2}}),
\end{eqnarray}
and thus, $|S_n-m_s|^3$ satisfies the Condition $\mathcal{A}$.

Based on Lemma \ref{lem:l2} and Markov's inequality, we obtain
\begin{eqnarray}
% \nonumber to remove numbering (before each equation)
Pr\left\{\left|\frac{2}{n}\sum_{i=1}^n\zeta_1(X_i)\right|\geq\varepsilon\right\}  &= & O(n^{-\frac{q}{2}}),\\
Pr\left\{\left|\frac{2}{n(n-1)}\sum_{1\leq i<j\leq n}\zeta_2(X_i,X_j)\right|\geq\varepsilon\right\}  &= & O(n^{-\frac{
q}{2}}).
\end{eqnarray}
This gives
\begin{equation}\label{eqn:psnms}
Pr\{|S_n-m_s|\geq \varepsilon\} = O(n^{-\frac{q}{2}}),
\end{equation}
and noting that
\begin{eqnarray}
% \nonumber to remove numbering (before each equation)
 Pr\left\{\left|m_s+\vartheta_n\right|<\frac{m_s}{2}\right\} %&=& Pr\left\{-\frac{m_s}{2}<m_s+\vartheta_n <\frac{m_s}{2}\right\} \nonumber\\
  &=& Pr\left\{-\frac{3m_s}{2}< \vartheta_n <-\frac{m_s}{2}\right\}  \nonumber\\
   &\leq& Pr\left\{|\vartheta_n| \geq \frac{m_s}{2}\right\}\nonumber\\
   &\leq& Pr\left\{|S_n-m_s| \geq \frac{m_s}{2}\right\},
\end{eqnarray}
we have
\begin{equation}\label{eqn:eq1}
   Pr\left\{\left|m_s+\vartheta_n\right|<\frac{m_s}{2}\right\} = O(n^{-\frac{q}{2}}).
\end{equation}
Furthermore, by (\ref{eqn:lneq3}), (\ref{eqn:eq1}) and
\begin{eqnarray}
&&Pr\left\{n(\ln n)|S_n-m_s|^3|m_s+\vartheta_n|^{-4}\geq \varepsilon \right\}\\
&=&Pr\left\{\{n(\ln n)|S_n-m_s|^3|m_s+\vartheta_n|^{-4}\geq\varepsilon\}\bigcap
\{|m_s+\vartheta_n|\geq\frac{m_s}{2}\}\right\}\nonumber\\
&&+Pr\left\{\{n(\ln n)|S_n-m_s|^3|m_s+\vartheta_n|^{-4}\geq\varepsilon\}\bigcap\{|m_s+\vartheta_n|<\frac{m_s}{2}\}\right\}\nonumber\\
%&\leq&
%Pr\left\{n(\ln n)|S_n-m_s|^3\left(\frac{m_s}{2}\right)^{-4}\geq\varepsilon\right\} +Pr\left\{|m_s+\vartheta_n|<\frac{m_s}{2}\right\}\nonumber \\
&\leq& Pr\left\{n(\ln n)|S_n-m_s|^3\left(\frac{m_s}{2}\right)^{-4}\geq\varepsilon\right\}
+Pr\left\{|m_s+\vartheta_n|<\frac{m_s}{2}\right\},
\end{eqnarray}
we have
\begin{equation}
Pr\{n(\ln n)|S_n-m_s|^3|m_s+\vartheta_n|^{-4}\}  = o(1).
\end{equation}
Similarly, we have
\begin{equation}
% \nonumber to remove numbering (before each equation)
Pr\left\{n|S_n-m_s|^3|m_s+\vartheta_n|^{-4}\geq\varepsilon\right\}=
O(n^{-\frac{q}{2}}).
\end{equation}
Hence, we conclude that $|S_n-m_s|^3|m_s+\vartheta_n|^{-4}$ {\em satisfies} the Condition $\mathcal{A}$.

\subsubsection*{Step 3}Secondly, we deal with the term $(S_n-m_s)^2$:
\begin{eqnarray}
%&&(S_n-m_s)^2\nonumber\\
%&=&\left(\frac{2}{n}\sum_{i=1}^n\zeta_1(X_i)+\frac{2}{n(n-1)}\sum_{1\leq <i<j\leq n}\zeta_2(X_i,X_j)\right)^2\nonumber\\
%&=&\left(\frac{2}{n}\sum_{i=1}^n\zeta_1(X_i)\right)^2+\left(\frac{2}{n(n-1)}\sum_{1\leq <i<j\leq n}\zeta_2(X_i,X_j)\right)^2\cr
%&&+ \frac{8}{n^2(n-1)}\sum_{i=1}^n\zeta_1(X_i)\sum_{1\leq <i<j\leq n}\zeta_2(X_i,X_j)\nonumber\\
(S_n-m_s)^2&=&\frac{4}{n}E|\zeta_1(X_i)|^2+\frac{4}{n^2}\sum_{i=1}^n\left(\zeta_1(X_i)^2-E|\zeta_1(X_i)|^2\right)+\frac{8}{n^2}\sum_{1\leq
i<j\leq n}\left(\zeta_1(X_i)\zeta_1(X_j)\right)\cr &&+\frac{8}{n^2(n-1)}\sum_{i=1}^n\zeta_1(X_i)\sum_{1\leq i<j\leq
n}\zeta_2(X_i,X_j). \label{eqn:sn2}
%+\frac{8}{n^2(n-1)}\Lambda_1\Lambda_2
\end{eqnarray}
In the second term on the RHS of (\ref{eqn:sn2}), $\zeta_1(X_i)^2-E|\zeta_1(X_i)|^2$ can be regarded as the kernel of a
$U$-statistic; then according to Lemma \ref{lem:l2} and Markov's inequality, we have
\begin{eqnarray}
% \nonumber to remove numbering (before each equation)
Pr\left\{n\left|\frac{4}{n^2}\sum_{i=1}^n\left(\zeta_1(X_i)^2-E|\zeta_1(X_i)|^2\right)\right|\geq \varepsilon \right\}   &=&  O(n^{-\frac{q}{2}}), \\
Pr\left\{n(\ln n)\left|\frac{4}{n^2}\sum_{i=1}^n\left(\zeta_1(X_i)^2-E|\zeta_1(X_i)|^2\right)\right|\geq \varepsilon \right\}   &=&
o(1),
\end{eqnarray}
and thus, the second term satisfies the Condition $\mathcal{A}$. For the third term on the RHS of (\ref{eqn:sn2}), we
have
\begin{equation}
% \nonumber to remove numbering (before each equation)
 \frac{8 m_s^{-3}}{n^2}\sum_{1\leq i<j\leq n} \zeta_1(X_i)\zeta_1(X_j)
 =  \frac{8 m_s^{-3}}{n(n-1)}\sum_{1\leq i<j\leq n} \zeta_1(X_i)\zeta_1(X_j)
 -\frac{8 m_s^{-3}}{n^2(n-1)}\sum_{1\leq i<j\leq n}
 \zeta_1(X_i)\zeta_1(X_j).
\end{equation}
Based on Lemma \ref{lem:l2}, we can instantly obtain that the second term on the RHS of the above equation satisfies
the Condition $\mathcal{A}$. Define
\begin{eqnarray}
% \nonumber to remove numbering (before each equation)
 \Lambda'_1 &=& \sum_{i=1}^n\zeta_1(X_i), \\
 \Lambda'_2 &=& \sum_{1\leq i<j\leq n}\zeta_2(X_i,X_j).
\end{eqnarray}
From Lemma \ref{lem:l1}, we can derive
\begin{eqnarray}
% \nonumber to remove numbering (before each equation)
E\left|\frac{8}{n^2(n-1)}\sum_{i=1}^n\zeta_1(X_i)\sum_{1\leq
<i<j\leq n}\zeta_2(X_i,X_j)\right|^q
%&=&E\left|\frac{8}{n^2(n-1)}\Lambda'_1\Lambda'_2\right|^q  \nonumber\\
%&\leq& \left(\frac{8}{n^2(n-1)}\right)^q\sqrt{E|\Lambda'_1|^{2q}E|\Lambda'_2|^{2q} }, \label{eqn:category2_1}\\
& =&O\left(n^{-\frac{3}{2}q}\right), \label{eqn:category2_2}
\end{eqnarray}
and thus, the last term on the RHS of (\ref{eqn:sn2})  satisfies the Condition $\mathcal{A}$.

\subsubsection*{Step 4}
We rewrite (\ref{eqn:sninverse}) by combining all the terms which satisfy the Condition $\mathcal{A}$ into a new term
$R_n'$ (which, as a consequence, satisfies the Condition $\mathcal{A}$ too), and obtain
\begin{eqnarray}
   S_n^{-1}&=&m_s^{-1}+ \frac{4m_s^{-3}}{n}E|\zeta_1(X_i)|^2-\frac{2m_s^{-2}}{n}
   \sum_{i=1}^n\zeta_1(X_i)-\frac{2m_s^{-2}}{n(n-1)}\sum_{1\leq
i<j\leq n}\zeta_2(X_i,X_j)\cr
   &&+\frac{8m_s^{-3}}{n(n-1)}\sum_{1\leq
i<j\leq n}\left(\zeta_1(X_i)\zeta_1(X_j)\right)+R'_n. \label{eqn:sninverse2}
\end{eqnarray}

\subsubsection*{Step 5}
Similarly to $S_n$, $T_n$ can be formulated using $H$-Decomposition as follows
\begin{equation}\label{eqn:tn}
   T_n=m_t+\frac{1}{n}\sum_{i=1}^n \tau_1(X_i)
\end{equation}
where $m_t = m_1$ and $\tau_1(X_i)= X^{(1)}_i-m_1$.

Consider $S_n^{-1}T_n$. Since $T_n$ is bounded, it is straightforward that $T_nR_n'$ satisfies the Condition
$\mathcal{A}$. Then, we multiply every term on the RHS of (\ref{eqn:sninverse2}) except $R'_n$ by the second term on
the RHS of (\ref{eqn:tn}) and identify the resulting terms satisfying the Condition $\mathcal{A}$. Firstly, according
to Lemma \ref{lem:l2}, it is obvious that the second term on the RHS of (\ref{eqn:sninverse2}) times the second term on
the RHS of (\ref{eqn:tn}) satisfies the Condition $\mathcal{A}$. Secondly, the third term on the RHS of
(\ref{eqn:sninverse2}) times the second term on the RHS of (\ref{eqn:tn}) produces
\begin{eqnarray}
% \nonumber to remove numbering (before each equation)
%    &&\left(\frac{1}{n}\sum_{i=1}^n \tau_1(X_i)\right)\times \left(\frac{2m_s^{-2}}{n}\sum_{i=1}^n\zeta_1(X_i)\right)\nonumber\\
   && \frac{2m_s^{-2}}{n}E(\zeta_1(X_1)\tau_1(X_1))
   + \frac{2m_s^{-2}}{n^2}\sum_{i=1}^n(\zeta_1(X_i)\tau_1(X_i)-E(\zeta_1(X_1)\tau_1(X_1)))\cr
   && +\frac{2m_s^{-2}}{n^2}\sum_{1\leq i<j\leq
   n}(\zeta_1(X_i)\tau_1(X_j)+\zeta_1(X_j)\tau_1(X_i)). \nonumber
\end{eqnarray}
In the above expression, the second term satisfies the Condition $\mathcal{A}$ from Lemma \ref{lem:l2}; the third term
is
\begin{eqnarray}
\frac{2m_s^{-2}}{n^2}\sum_{1\leq i<j\leq n}(\zeta_1(X_i)\tau_1(X_j)+\zeta_1(X_j)\tau_1(X_i))
= \frac{2m_s^{-2}}{n(n-1)}\sum_{1\leq i<j\leq n}(\zeta_1(X_i)\tau_1(X_j)+\zeta_1(X_j)\tau_1(X_i))\cr
 -\frac{2m_s^{-2}}{n^2(n-1)}\sum_{1\leq i<j\leq n}(\zeta_1(X_i)\tau_1(X_j)+\zeta_1(X_j)\tau_1(X_i)),
\end{eqnarray}
and the second term on the RHS of the above equation satisfies the Condition $\mathcal{A}$ from Lemma \ref{lem:l2}.
Thirdly, similarly to the treatments in the last term on the RHS of (\ref{eqn:sn2}), we can show that the fourth and
fifth terms on the RHS of (\ref{eqn:sninverse2}) times the second term on the RHS of (\ref{eqn:tn}) both satisfy the
Condition $\mathcal{A}$.

Combining all the associated terms satisfying the Condition $\mathcal{A}$ in $S_n^{-1}T_n$ into a new term $R_n$, we
then prove this lemma.
\end{proof}

\section{Proof of Theorem \ref{thm:rssconvergence}} \label{sec:proofofrssconvergence}
%% firstly, show that \sqrt{n}R_n converge to 0 in probability
%% \sqrt{n}M_n converge to normality or chi-squre in distribution
%% then, \sqrt{n}(M_n+R_n) converge to the same distribution as \sqrt{n}M_n

%The skeleton of the proof is: firstly, expand $T_n/S_n$ using Lemma
%\ref{lem:expansion}; secondly, prove that $M_n+R_n$ converge to
%$M_n$ in the second moment; thirdly, show that $M_n$ converge to
%normal and chi-square random variables for the non-degenerate and
%degenerate cases, respectively.

\begin{proof}
We use the notations in Lemma \ref{lem:expansion} and define $\sigma_g=Std(g_1(X_1))$. Moreover, we have
\begin{equation}
   Tr(C_{RSS})=\left(\frac{2}{b(n-1)}\right)\frac{T_n}{S_n}.
\end{equation}

It follows from Lemma \ref{lem:expansion} that $M_n$ is a $U$-statistic of degree $2$ and
\begin{equation}
  % Pr\left\{|R_n|\geq \frac{1}{n\ln n}\right\} &=& o(1)\\
   Pr\left\{|\sqrt{n}R_n|\geq \frac{\varepsilon}{\sqrt{n}\ln n}\right\}  =  o(1),
\end{equation}
implying that $\sqrt{n}R_n$ converges to $0$ in probability. By $\sigma_g>0$ (due to $\sigma_1>0$) and Theorem $1$ on
page $76$ in \cite{Lee90}, $\frac{1}{2}\sigma_g^{-1}\sqrt{n}M_n$ converges to standard normality as $n\to\infty$. By
letting $W_n=\frac{1}{2}\sigma_g^{-1}\sqrt{n}(M_n+R_n)$, we conclude that as $n\to\infty$, $W_n$ converges in
distribution to standard normality from Theorem $12$ on page $16$ in \cite{Petrov75}.
\end{proof}

\section{Proof of Theorem \ref{thm:rssrate}}\label{sec:proofofrssrate}
\begin{proof}
Let $G_n(x)$ be the distribution function of $\frac{1}{2}\sigma_g^{-1}\sqrt{n}M_n$. From Theorem $1.2$ in \cite{BGZ97},
we have
\begin{equation}
   \sup_x\left|G_n(x)-G(x)\right| = O(n^{-1})
\end{equation}
where
\begin{equation}
   G(x) = \Phi(x)- \left(\frac{\nu_3+\frac{6\sigma_1^4}{m_1}}{6\sigma_g^3}\right)\Phi'''(x)n^{-\frac{1}{2}}.
\end{equation}
Then, we  can obtain
\begin{eqnarray}
% \nonumber to remove numbering (before each equation)
 |G_n(x)-\Phi(x)|  &\leq& |G(x)-\Phi(x)| + |G_n(x)-G(x)|\nonumber\\
  &\leq& \left|\left(\frac{\nu_3+\frac{6\sigma_1^4}{m_1}}{6\sigma_g^3 }\right)\Phi'''(x)\right|n^{-\frac{1}{2}} +
  O(n^{-1}).
\end{eqnarray}

For any $\varepsilon>0$, using Lemma $3$ on page $16$ in \cite{Petrov75}, we have
\begin{equation*}
% \nonumber to remove numbering (before each equation)
G_n(x-\varepsilon) - Pr\left\{\left|\frac{1}{2}\sigma_g^{-1}\sqrt{n}R_n\right|\geq\varepsilon\right\}  \leq F_n(x) \leq
G_n(x+\varepsilon) + Pr\left\{\left|\frac{1}{2}\sigma_g^{-1}\sqrt{n}R_n\right|\geq\varepsilon\right\}
%G_n(x-\varepsilon)- \Phi(x)  - Pr\{|\frac{1}{2}\sigma_g^{-1}\sqrt{n}R_n|\geq\varepsilon\}  \leq F_n(x)- \Phi(x)   \leq  G_n(x+\varepsilon) - \Phi(x) + Pr\{|\frac{1}{2}\sigma_g^{-1}\sqrt{n}R_n|\geq\varepsilon\}\\
\end{equation*}
and thus
\begin{eqnarray*}
|F_n(x)- \Phi(x)| \leq \max\left\{|G_n(x\pm\varepsilon)- \Phi(x)|\right\}+
Pr\left\{\left|\frac{1}{2}\sigma_g^{-1}\sqrt{n}R_n\right|\geq\varepsilon\right\}\\
|F_n(x)- \Phi(x)| \leq \max\{|G_n(x\pm\varepsilon)- \Phi(x\pm\varepsilon)|+| \Phi(x\pm\varepsilon)-\Phi(x)|\}+
Pr\left\{\left|\frac{1}{2}\sigma_g^{-1}\sqrt{n}R_n\right|\geq\varepsilon\right\}.
\end{eqnarray*}

By letting $\varepsilon=n^{-\frac{1}{2}}$, we have $Pr\{|\frac{1}{2}\sigma_g^{-1}\sqrt{n}R_n|\geq
n^{-\frac{1}{2}}\}=O(n^{-1})$ from Lemma \ref{lem:expansion}. Moreover, it is easy to show $|\Phi(x\pm
n^{-\frac{1}{2}})-\Phi(x)|=O(n^{-1})$. Then, the theorem is proved.\end{proof}

\section{Proof of Theorem \ref{thm:toaconvergence}} \label{sec:proofoftoaconvergence}
\begin{proof}
We use the same notations as defined in Theorem \ref{thm:rssconvergence}. By letting $X^{(1)}_i=1$ and
$X^{(2)}_i=\alpha_i$, we obtain $\sigma_g=0$, $m_1=1$, $m_2=0.5$ and
\begin{equation}
   Tr(C_{TOA})=\left(\frac{2\sigma^2}{n-1}\right)\frac{T_n}{S_n}.
\end{equation}

The kernel of the $U$-statistic $M_n$ can be expressed as
%\begin{equation}
%    \varphi(\alpha_1,\alpha_2)=2\cos2\alpha_1\cos2\alpha_2+2\sin2\alpha_1\sin2\alpha_2,
%\end{equation}
%and then
\begin{equation}
   \varphi(\alpha_1,\alpha_2)= \sum_{i=1}^2 \varphi_i(\alpha_1)\varphi_i(\alpha_2)
\end{equation}
where $\varphi_1(\alpha_1)=\sqrt{2}\cos2\alpha_1$ and $\varphi_2(\alpha_1)=\sqrt{2}\sin2\alpha_1$.

From Theorem $1$ on page $79$ in \cite{Lee90}, we derive that $nM_n$ converges in distribution to $Z_1^2+Z_2^2-2$ where
$Z_1$ and $Z_2$ are independent and standard normal, namely that $nM_n+2$ converges in distribution to a chi-square
random variable of degree $2$. Similarly to the treatments in Theorem \ref{thm:rssconvergence}, we can have
\begin{equation}
   Pr\left\{|nR_n|\geq \frac{\varepsilon}{\ln n}\right\} = o(1),
\end{equation}
implying that $nR_n$ converges to $0$ in probability. By letting $V_n=n(M_n+R_n)+2$, the theorem is immediately proved.
%we can derive that $V_n$ converges in distribution to a chi-square random variable of degree $2$ and proves this theorem.
\end{proof}

\begin{comment}
\section*{Acknowledgment}

B. Huang, C. Yu and B.D.O. Anderson are supported by the ARC (Australian Research Council) under DP-110100538 and
NICTA. C. Yu is an ARC Queen Elizabeth II Fellow and is also supported by Overseas Expert Program of Shandong Province.
B. Huang is also supported by the Key Laboratory of Computer Networks of Shandong Province. NICTA is funded by the
Australian Government as represented by the Department of Broadband, Communications and the Digital Economy and the
Australian Research Council through the ICT Centre of Excellence program. This material is based on research sponsored
by the Air Force Research Laboratory, under agreement number FA2386-10-1-4102. The U.S. Government is authorized to
reproduce and distribute reprints for Governmental purposes notwithstanding any copyright notation thereon. The views
and conclusions contained herein are those of the authors and should not be interpreted as necessarily representing the
official policies or endorsements, either expressed or implied, of the Air Force Research Laboratory or the U.S.
Government.

\end{comment}
% Can use something like this to put references on a page
% by themselves when using endfloat and the captionsoff option.
\ifCLASSOPTIONcaptionsoff
 \newpage
\fi

% trigger a \newpage just before the given reference
% number - used to balance the columns on the last page
% adjust value as needed - may need to be readjusted if
% the document is modified later
%\IEEEtriggeratref{8}
% The "triggered" command can be changed if desired:
%\IEEEtriggercmd{\enlargethispage{-5in}}

% references section

% can use a bibliography generated by BibTeX as a .bbl file
% BibTeX documentation can be easily obtained at:
% http://www.ctan.org/tex-archive/biblio/bibtex/contrib/doc/
% The IEEEtran BibTeX style support page is at:
% http://www.michaelshell.org/tex/ieeetran/bibtex/
%\bibliographystyle{IEEEtran}
% argument is your BibTeX string definitions and bibliography database(s)
%\bibliography{IEEEabrv,../bib/paper}
%
% <OR> manually copy in the resultant .bbl file
% set second argument of \begin to the number of references
% (used to reserve space for the reference number labels box)
\bibliographystyle{IEEEtran}
\bibliography{IEEEabrv,errorpropagation}

\end{document}